\newcommand{\N}{\mathcal{N}}
\newcommand{\W}{\mathcal{W}}
\renewcommand{\L}{\mathcal{L}}
\newcommand{\Nat}{\mathbb{N}}
\newcommand{\F}{\mathcal{F}}
\newcommand {\Ra} {\Rightarrow}
\newcommand {\ra} {\rightarrow}
\newcommand{\Ta}{{\it Ta}}
\newcommand{\by}[2][]{\xrightarrow[#1]{#2}}
\def\mystrut{\rule{0pt}{.6em}}
\newcommand*{\oline}[1]{\overline{#1\mystrut}}
\newtheorem{theoremDummy}{Theorem}
\newenvironment{reftheorem}[1]
{\begin{theoremDummy}}
{\end{theoremDummy}}
\newtheorem{lemmaDummy}{Lemma}
\newenvironment{reflemma}[1]
{\begin{lemmaDummy}}
{\end{lemmaDummy}}
\newenvironment{proofsketch}{\par
  \normalfont
  \topsep6\p@\@plus6\p@ \trivlist
  \item[\hskip\labelsep\itshape
    Proof Sketch.]\ignorespaces
}{%
  \endtrivlist
}
\newcommand{\reductionrule}[4]{
\begin{definition}{#1}\\
\label{#2}
\vspace{3pt}
\setlength{\extrarowheight}{5pt}
\begin{tabularx}{\textwidth}{lX}
{\bfseries Guard}: & {
\begin{minipage}[t]{\linewidth}
#3
\end{minipage}}
\\
{\bfseries Action}: &{
\begin{minipage}[t]{\linewidth}
#4
\end{minipage}}
\end{tabularx}

\end{definition}
}
\newcommand{\algmargin}{\the\ALG@thistlm}
\newcommand{\algbox}[1]{\parbox[t]{\dimexpr\linewidth-\algmargin}{#1\strut}}
\g@addto@macro{\UrlBreaks}{\UrlOrds}
\begin{document}
\title{Reduction Rules for Colored Workflow Nets\thanks{This work was partially funded by the DFG Graduiertenkolleg 1480 (PUMA).}}
\author{Javier Esparza \and Philipp Hoffmann}
\institute{Technische Universit\"at M\"unchen}
\maketitle\begin{abstract}
We study Colored Workflow nets \cite{jensen2009coloured}, a model based on Workflow nets \cite{DBLP:journals/jcsc/Aalst98} 
enriched with data. Based on earlier work by Esparza and Desel on the negotiation model of concurrency \cite{negI,negII},
we present reduction rules for our model. Contrary to previous work, 
our rules preserve not only soundness, but also the data flow semantics. 
For free choice nets, the rules reduce all sound nets (and only them) to a net with one single transition and the same 
data flow semantics. We give an explicit algorithm that requires only a polynomial number of 
rule applications.
\end{abstract}

\newlength{\saveold}
\setlength{\saveold}{\textfloatsep}
\section{Introduction}
Workflow Petri nets \cite{DBLP:journals/jcsc/Aalst98,van2004workflow} are 
a very successful formalism for modeling and analyzing
business processes. They have become the most popular formal backend 
for graphical notations like BPMN (Business Process Modeling Notation),
EPC (Event-driven Process Chain), or UML Activity Diagrams, which typically 
do not have a formal semantics. By translating the basic constructs of such 
languages into Petri nets one gets access to a large variety of analysis 
techniques and tools. 

One of these analysis techniques is {\em reduction}.
Reduction algorithms are a very efficient analysis technique for workflows, 
EPCs, AND-XOR graphs and other models (see for instance \cite{sadiq2000analyzing,van2002alternative,van2005verification,verbeek2010reduction}). They consist of a set of {\em reduction rules}, whose application allows one to simplify the workflow while preserving important properties. Reduction aims to elude the state-explosion problem, and, when the property does not hold, 
provides error diagnostics in the form of an irreducible graph \cite{van2002alternative}. Moreover, for certain classes of nets the rules can be {\em complete}, meaning that they reduce all workflows satisfying the property to some 
unique canonical workflow (and only them); in this case, reduction provides a decision algorithm for the property that avoids any kind of state-space exploration. Reduction algorithms are an important part of the well-known 
Woflan tool \cite{DBLP:journals/cj/VerbeekBA01,DBLP:conf/icsoc/OuyangVABDH05}.

Free choice workflow nets (also called workflow graphs) 
are a class of workflow nets that captures many control-flow 
constructs of BPMN, EPC, or Activity Diagrams (see \cite{DBLP:journals/jcsc/Aalst98},
or \cite{DBLP:journals/is/FavreFV15} for a very recent study). In
\cite{van2002alternative} it is shown that a certain set of reduction rules 
for free choice workflow models, originally presented in \cite{desel2005free},
preserves the {\em soundness} property, and is complete. Soundness is 
a fundamental analysis problem for workflows \cite{DBLP:journals/jcsc/Aalst98,van2011soundness}. Loosely speaking, a workflow net is sound if a distinguished marking signaling successful termination is reachable from any reachable marking. The reduction algorithm provides a polynomial-time decision procedure for soundness, in sharp contrast with the fact that 
deciding soundness is at least PSPACE-hard for general workflow nets\footnote{The exact complexity depends on the specifics of the workflow model, for instance whether the workflow Petri net is assumed to be 1-safe or not.}.

However, the rules of \cite{desel2005free} have two important shortcomings. First, while they preserve soundness, they do not preserve any property concerning {\em data}. Workflows manipulating data can be modeled as {\em colored} workflow nets \cite{jensen2009coloured}, 
where tokens carry data values, and transitions transform a tuple of values for its input places into 
a tuple of values for its output places. The {\em linearly dependent place rule} (Rule 2 in Chapter 7 of \cite{desel2005free})
allows one to {\em remove} place $p$ from a net, if it is redundant in the sense that there are other places which together have the same incoming and outgoing transitions as $p$. However, this reduction does 
not make sense for the colored workflow net: the tokens on $p$ might hold a value needed by an outgoing transition 
$t$ to compute the value of the produced tokens! Loosely speaking, the
application of the rule destroys the dataflow semantics of the net. 

\iffalse
To see the problem raised by the rules, consider the 
fragment of a free choice colored workflow net with data shown in Figure \ref{fig:fragment}. 
For example, transition $t_1$ takes a token out of $p_1$, binds the variable $x$ to its value, and 
outputs two tokens to $p_2$ and $p_3$ with values $f(x)$ and $g(x)$. Similarly, transition $t_2$ 
takes tokens from $p_3$ and $p_5$, binds them
to $y$ and $z$, etc. 

\begin{figure}[htbp]
\centering
\input{tikz/introExample}
\caption{A fragment of a workflow net with data}
\label{fig:fragment}
\end{figure}

The {\em linearly dependent place rule} (Rule 2 in Chapter 7 of \cite{desel2005free})
allows one to {\em remove} place $p_2$ from the net, because its row in the incidence matrix is equal to
the sum of the rows of the places $p_3$ and $p_6$ (intuitively, it is the ``sum'' of the places $p_3$ 
and $p_6$, and so in some sense ``redundant''). This is correct with respect to soundness:
the reduced workflow net is sound if{}f the original net is sound. However, the reduction does 
not even make sense for the colored workflow net, because the first value $u$ needed by transition 
$t_3$ to compute the value $i(u,v)$ gets lost! In other words, the
application of the rule totally destroys the semantics of the net. 
\fi

The second shortcoming is that the linearly dependent place rule is not correct for arbitrary 
workflow nets, only for free choice ones (\cite{desel2005free}, page 145\footnote{The example of 
page 145 is not a workflow net, but can be easily transformed into one.}). Since
not all industrial business processes are free choice (30\% of our benchmarks 
in Section \ref{sec:experiments} are non-free choice), this considerably reduces the 
applicability of the rules.  

The most satisfactory solution to these two problems would be to replace the linearly 
dependent place rule by rules extensible to colored nets, while keeping completeness. 
However, this problem has remained open for over 15 years. 
 
In this paper we solve this problem and present a set of surprisingly simple 
rules that overcomes the shortcomings. First, 
the rules can be applied to arbitrary colored workflow nets. Second, they preserve not 
only the sound/unsound character of the net, but also the {\em input/output relation} of 
the workflow; more precisely, the original workflow net has a firing sequence that transforms an entry token
with value $v_{in}$ into an exit token with value $v_{out}$ if{}f the net after the reduction
also has such a sequence. Therefore, the rules can be applied to decide any property
of the input/output relation. Finally, the new rules are complete for free choice 
workflow nets.

Our results rely on previous work on {\em negotiations}, a model of concurrency
introduced in \cite{negI,negII}. Negotiations share many features with Petri nets,
but, unlike Petri nets, are a structured model of communicating sequential agents. 
In \cite{negII} a complete set of 
reduction rules for the class of {\em deterministic negotiations} is presented. We 
generalize the results of \cite{negII} to show that a similar set of rules is correct 
for arbitrary workflow nets, and complete for free choice workflow nets. Since the proofs 
of \cite{negII} make strong use of the agent structure,
we must substantially modify them, and in fact write many of them from scratch. 
Moreover, because of the agent structure of negotiations, 
workflow nets obtained as translations of negotiations are automatically 1-safe. Therefore,
the results cannot be used to deal with variants of the soundness notion, like
$k$-soundness or generalized soundness \cite{van2011soundness}. Making use of the theory of 
free choice nets we can however show that our rules are still correct and complete for these variants. 

Finally, and as a third contribution of the paper, we report on
some experimental results. In \cite{negII} only the rules and the 
completeness result are presented, but neither a specific algorithm prescribing a concrete strategy to
decide which rule to apply at which point, nor an implementation and experimental validation.
In this paper we report on a prototype implementation, and on experimental results on a benchmark suite 
of nearly 2000 workflows derived from industrial business processes. 

\paragraph{Other related work.} The soundness problem has been extensively studied, both from a theoretical
and a practical point of view, and very efficient verification algorithms have been developed 
(see e.g. \cite{van2011soundness} for a comprehensive survey). Our approach is not more efficient for 
checking soundness than the ones of e.g. \cite{fahland2009instantaneous}, but can also be applied to checking arbitrary properties of the input/output relation, while retaining completeness. In \cite{sadiq2004data,trvcka2009data} state-space exploration of workflows is performed to identify data flow anti-patterns
(like a variable being assigned a value during an execution, but never being read afterwards).
Our technique aims at avoiding state-space exploration and considers properties of the input/output relation.

The paper is organized as follows. Section \ref{sec:workflowNets} defines
workflow nets, free choice nets, and soundness. Section \ref{sec:rules} presents 
our reduction rules and proves them correct. In Section \ref{sec:procedure} we first show completeness 
for acyclic nets and then extend the result to cyclic nets. Section \ref{sec:experiments} 
presents experimental results on the benchmarks of \cite{van2007verification,fahland2009instantaneous}. 
Finally, Section \ref{sec:conclusion} contains some conclusions 
and open questions. The proofs of all results can be found in the appendix.

\section{Workflow Nets and Colored Workflow Nets}
\label{sec:workflowNets}

We recall the definitions of workflow nets and the soundness property.

\begin{definition}[Workflow net]\cite{DBLP:journals/jcsc/Aalst98}
A Workflow net (WF net) is a quintuple $(P,T,F,i,o)$ where
\begin{itemize}
\item $P$ is a finite set of places.
\item $T$ is a finite set of transitions ($P\cap T = \emptyset$).
\item $F \subseteq (P\times T) \cup (T \times P)$ is a set of arcs.
\item $i, o \in P$ are places such that $i$ has no incoming arcs, $o$ has no outgoing arcs.
\item The graph $(P\cup T, F\cup(o, i))$ is strongly connected.
\end{itemize}
\end{definition}

We write ${}^\bullet p$ and $p^\bullet$ to denote the input and output
transitions of a place $p$, respectively, and similarly ${}^\bullet t$ and 
$t^\bullet$ for the input and output places of a transition $t$.
A marking $M$ is a function from $P$ to the natural numbers that assigns a 
number of tokens to each place. A transition $t$ is enabled 
at $M$ if all places of ${}^\bullet t$ contain at least one token in $M$. 
An enabled transition may fire, removing a token 
from each place of ${}^\bullet t$ and adding one token to each 
place of $t^\bullet$. The {\em initial marking} ({\em final marking}) 
of a workflow net puts one token on place $i$ (on place $o$), 
and no tokens elsewhere. A marking is {\em reachable} if some sequence of
transition firings leads from the initial marking to it. We call elements in $P\cup T$ the nodes of the workflow net.

\begin{definition}[Soundness]\cite{DBLP:journals/jcsc/Aalst98}
A WF net $\W=(P,T,F,i,o)$ is sound if
\begin{itemize}
\item the final marking is reachable from any reachable marking, and 
\item every transition occurs in some firing sequence starting from the initial marking.
\end{itemize}
\end{definition}

When modeling a workflow, it is useful to model not only control flow 
but also data flow. We do so by means of Colored Workflow nets.

\begin{definition}[Colored WF net]\cite{jensen2009coloured}
A {\em colored WF net} (CWF net) is a tuple $\W=(P,T,F,i,o,V,\lambda)$ where 
$(P,T,F,i,o)$ is a WF net, $V$ is a function that assigns to 
every place $p \in P$ a \emph{color set} $C_p$ and $\lambda$ is a function that
assigns to each transition $t\in T$ 
a left-total relation $\lambda(t)\subseteq\prod_{p\in{}^\bullet t} C_p \times \prod_{p\in t\bullet} C_p$ 
between the values of the input places and those of the output places of $t$. 

A \emph{colored marking} $M$ of $\W$ is a function that assigns to each place 
$p$ a multiset $M(p)$ over $C_p$, interpreted as a multiset of {\em colored tokens} currently on $p$. A colored marking is {\em initial} ({\it final}) if it puts one token on place $i$ (on place $o$), of any color in $C_i$ ($C_o$), and no tokens elsewhere. 
\end{definition}

Observe that there are as many initial markings as elements in $C_i$. To distinguish between input and output values of a transformer $\lambda$, 
we separate them by a $\ra$.

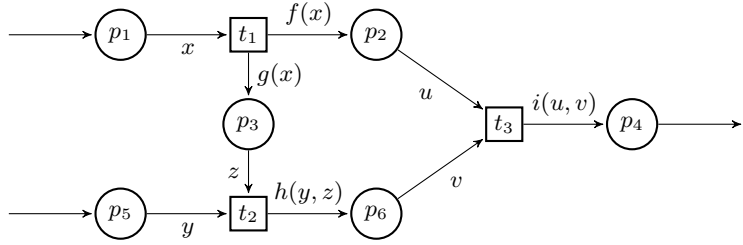
\begin{figure}[tb]
\centering
\begin{tikzpicture}[>=stealth',bend angle=45,auto,scale=1.7]
  \tikzstyle{place}=[circle,thick,draw=black,fill=white,minimum size=6mm]
  \tikzstyle{wt}=[rectangle,thick,draw=white,fill=white,minimum size=4mm]
  \tikzstyle{transition}=[rectangle,thick,draw=black,fill=white,minimum size=4mm]
  \tikzstyle{every label}=[black]
   
    \node [place] at (0,0) (p1)                 {$p_1$};
    \node [place] at (2,0) (p2)                 {$p_2$};
    \node [place] at (1,-0.7) (p3)                 {$p_3$};
    \node [place] at (4,-0.7) (p4)                 {$p_4$};  
    \node [place] at (0,-1.4) (p5)                 {$p_5$};
    \node [place] at (2,-1.4) (p6)                 {$p_6$};

    \node [transition] at (1,0)  (t1) {$t_1$}
      edge [pre]  node{$x$}      (p1)
      edge [post] node{$f(x)$} (p2)
      edge [post] node{$g(x)$} (p3);

    \node [transition] at (1,-1.4)  (t2) {$t_2$}
      edge [pre]  node {$z$}      (p3)
      edge [pre]  node {$y$}      (p5)
      edge [post] node {$h(y,z)$} (p6);

    \node [transition] at (3,-0.7)  (t3) {$t_3$}
      edge [pre]  node {$u$}      (p2)
      edge [pre]  node {$v$}      (p6)
      edge [post] node {$i(u,v)$} (p4);
  
    \node [wt] at (-1,0) (t0d) {}
      edge [post]                (p1);

    \node [wt] at (-1,-1.4) (t1d) {}
      edge [post]                (p5);

    \node [wt] at (5,-0.7) (t2d) {}
      edge [pre]                 (p4);

\end{tikzpicture}
\caption{A partial workflow net with data}
\label{fig:fragment}
\end{figure}

Consider the partial workflow net in Figure \ref{fig:fragment} and take $C_p = \Nat$ for every place $p$ of the net.
An example of a colored marking could be the marking $( \{3\}, \emptyset, \emptyset, \emptyset, \{2,4\}, \emptyset)$ which puts a token of color 3 on $p_1$ and two tokens, one of color 2 and one of color 4, on $p_5$.
If $f(x) = x+1$ and $g(x) = x+2$, then we have 
$\lambda(t_1) = \{ (n \ra n+1, n+2) \mid n \geq 0 \}$. 

We call $\lambda(t)$ the transformer associated with $t$. 
When a transition $t$ fires, the colored marking changes in the 
expected way \cite{jensen2009coloured}: (a) remove a token from 
each input place of $t$; (b) choose an element of $\lambda(t)$ whose 
projection onto the input places matches the tuple of removed tokens; 
(c) add the projection of $\lambda(t)$ onto the output places 
to the output places of $t$. We write $M \by{t}M'$ to denote that $t$ is 
enabled at $M$ and its firing leads to $M'$. For example, the colored marking
$( \{3\}, \emptyset, \emptyset, \emptyset, \{2,4\}, \emptyset)$ enables 
transition $t_1$, and taking $h(y,z) = y \cdot z$ we have 
$$( \{3\}, \emptyset, \emptyset, \emptyset, \{2,4\}, \emptyset)
\by{t_1} (\emptyset, \{4\}, \{5\}, \emptyset, \{2,4\}, \emptyset)
\by{t_2} (\emptyset, \{4\}, \emptyset, \emptyset, \{4\}, \{10\}) \ .$$

\subsection{A colored version of the insurance claim example}
\label{sec:example}

We extend the well known insurance complaint process of \cite{DBLP:journals/jcsc/Aalst98} with data. 
The workflow is shown in Figure \ref{fig:claim}. 
After initial registration of the complaint, 
a questionnaire is sent to the complainant. 
In parallel, the complaint is evaluated. The evaluation decides whether processing is required. In that case, the processing takes place (e.g. by some employee) and is checked for correctness (e.g. by a senior employee) which may either lead to another round of processing if an error is found, or the processing ends. Finally, the complaint is archived.

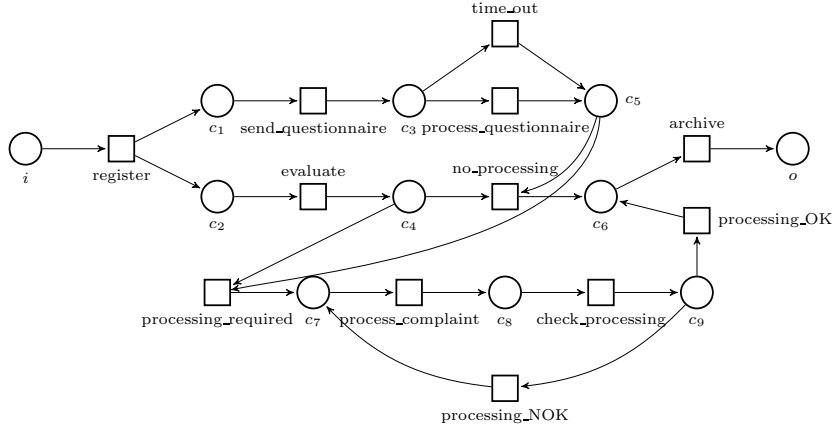
\begin{figure}[t]
\centering
\scalebox{0.85}{
\begin{tikzpicture}[>=stealth',bend angle=45,auto,scale=1.5]
	\tikzstyle{every node}=[font=\scriptsize]
	\tikzstyle{place}=[circle,thick,draw=black,fill=white,minimum size=5mm,inner sep=0mm]
	\tikzstyle{transition}=[rectangle,thick,draw=black,fill=white,minimum size=4mm,inner sep=0mm]
	\tikzstyle{every label}=[black]
	
	\node [place] at (0,0) (i){};
	\node [below=0cm of i] {$i$};
	\node [place] at (2,0.5) (c1){};
	\node [below=0cm of c1] {$c_1$};
	\node [place] at (2,-0.5) (c2){};
	\node [below=0cm of c2] {$c_2$};
	\node [place] at (4,0.5) (c3){};
	\node [below=0cm of c3] {$c_3$}; 
	\node [place] at (4,-0.5) (c4){};
	\node [below=0cm of c4] {$c_4$};
	\node [place] at (6,0.5) (c5){};
	\node [right=0cm of c5] {$c_5$};
	\node [place] at (6,-0.5) (c6){};
	\node [below=0cm of c6] {$c_6$};
	\node [place] at (3,-1.5) (c7){};
	\node [below=0cm of c7] {$c_7$};
	\node [place] at (5,-1.5) (c8){};
	\node [below=0cm of c8] {$c_8$};
	\node [place] at (7,-1.5) (c9){};
	\node [below=0cm of c9] {$c_9$};
	\node [place] at (8,0) (o){};
	\node [below=0cm of o] {$o$};

	\node [transition] at (1,0) (t1) {}
	edge [pre] (i)
	edge [post] (c1)
	edge [post] (c2);
	\node [below=0cm of t1] {register};

	\node [transition] at (3,0.5) (t2) {}
	edge [pre] (c1)
	edge [post] (c3);
	\node [below=0cm of t2] {send\_questionnaire};

	\node [transition] at (5,0.5) (t3) {}
	edge [pre] (c3)
	edge [post] (c5);
	\node [below=0cm of t3] {process\_questionnaire};

	\node [transition] at (5,1.2) (t4) {}
	edge [pre] (c3)
	edge [post] (c5);
	\node [above=0cm of t4] {time\_out};

	\node [transition] at (3,-0.5) (t5) {}
	edge [pre] (c2)
	edge [post] (c4);
	\node [above=0cm of t5] {evaluate};

	\node [transition] at (5,-0.5) (t6) {}
	edge [pre] (c4)
	edge [pre,bend right=30] (c5)
	edge [post] (c6);
	\node [above=0cm of t6] {no\_processing};

	\node [transition] at (2,-1.5) (t7) {}
	edge [pre] (c4)
	edge [post] (c7);
	\node [below=0cm of t7] {processing\_required};

	\draw[->] (c5) .. controls (5.9, -1) and (3, -1.3).. (t7);

	\node [transition] at (4,-1.5) (t8) {}
	edge [pre] (c7)
	edge [post] (c8);
	\node [below=0cm of t8] {process\_complaint};

	\node [transition] at (6,-1.5) (t9) {}
	edge [pre] (c8)
	edge [post] (c9);
	\node [below=0cm of t9] {check\_processing};

	\node [transition] at (5,-2.5) (t10) {}
	edge [pre,bend right=20] (c9)
	edge [post,bend left=20] (c7);
	\node [below=0cm of t10] {processing\_NOK};

	\node [transition] at (7,-0.75) (t11) {}
	edge [pre] (c9)
	edge [post] (c6);
	\node [right=0cm of t11] {processing\_OK};

	\node [transition] at (7,0) (t12) {}
	edge [pre] (c6)
	edge [post] (o);
	\node [above=0cm of t12] {archive};

\end{tikzpicture}
}
\caption{Insurance claim process}
\label{fig:claim}
\end{figure}

We add colors to keep track of the status of the complaint and its estimated cost for the company, modeled by a number in the interval $[1..10]$ (see Table \ref{tab:datavalues}). Furthermore each claimant belongs to a customer group, either \textrm{A} or \textrm{B}. \textrm{A}'s and \textrm{B}'s insurance policies entitle them, respectively, to the full cost or to half the cost of the damage. 
The color sets of places $i,o,c_2,c_6$ are the pairs 
$\{A, B\} \times [1..10]$, modeling the customer group and the cost of the claim as estimated by the customer. The colors of place
$c_4$ additionally contain the result of the evaluation: PR (process) or NPR (do not process). Colors of $c_5$ store the result of the questionnaire:
the answer to the question ``was it your fault?'' (YES/NO), or a time out (TO). In place $c_7$, the information from $c_4$ and $c_5$ is put together, and in $c_8$ the result of the first processing is added. Finally, tokens in
$c_9$ can have the same values as those in $c_8$, plus an additional value ERR if the check at transition \texttt{check\_processing} reveals a miscalculation. Tokens in $c_6$ and $o$ store the amount that was actually paid by the company after the processing was successful (or without processing).

\begin{table}[t]
\begin{center}
\begin{tabular}{lll}
$C_i = C_o = C_{c_2} = C_{c_6} = \{A,B\} \times [1..10]$ & \qquad \qquad & $C_{c_7} = C_{i} \times C_{c_5}$ \\
$C_{c_1} = C_{c_3} = \{\bullet\}$ &  & $C_{c_8} = C_{c_7} \times [1..10]$ \\
$C_{c_4} = C_i \times \{\rm PR,NPR\} $ & & $C_{c_9} = C_{c_7} \times ([1..10] \cup \{{\rm ERR}\})$\\
$C_{c_5} = \{{\rm YES}, {\rm NO}, {\rm TO}\}$ & &  \\
\end{tabular}
\end{center}

\begin{center}
$\begin{array}{rcl}
\lambda(\texttt{register}) & = & \{(x,k \ra \{\bullet\}\times\{x,k\})\mid 1\leq k\leq 10 \} \\
\lambda(\texttt{send\_questionnaire})&=&\{(\bullet \ra \bullet)\}\\
\lambda(\texttt{time\_out}) & = & \{(\bullet\ra {\rm TO})\}\\
\lambda(\texttt{process\_questionnaire}) & = & \{(\bullet\ra {\rm YES}),(\bullet\ra {\rm NO})\}\\
\lambda(\texttt{evaluate}) & = & \{(x,k\ra x,k,{\rm NPR})\mid 1\leq k\leq 3\}\\
&\cup & \{(x,k\ra x,k,{\rm PR})\mid 4\leq k\leq 10\}\\
\lambda(\texttt{no\_processing}) & = & \{(x,k,{\rm NPR},q\ra x,k) \mid 1\leq k\leq 3\}\\
\lambda(\texttt{processing\_required}) & = & \{(x,k,{\rm PR},q\ra x,k,q) \mid 4\leq k\leq 10\}\\
\lambda(\texttt{process\_complaint}) & = & \{(x,k,q\ra x,k,q,v) \mid 4\leq k\leq 10, 1\leq v \leq k\}\\
\lambda(\texttt{check\_processing})& = & \{(x,k,v,q\ra x,k,q,v) \mid x = {\rm A}, 4\leq k\leq 10, v = k\}\\
& \cup & \{(x,k,v,q\ra x,k,q,v) \mid x = {\rm B}, 4\leq k\leq 10, v = k/2\}\\
& \cup & \{(x,k,v,q\ra x,k,q,ERR) \mid \text{otherwise}\}\\
\lambda(\texttt{processing\_NOK}) & = & \{(x,k,q,{\rm ERR}\ra x,k,q)  \mid 4\leq k\leq 10\}\\
\lambda(\texttt{processing\_OK}) & = & \{(x,k,q,v\ra x,v) \mid 4\leq k\leq 10, 1\leq v\leq 10\}\\
\lambda(\texttt{archive}) & = & \{(x,v\ra x,v) \mid (x,v) \in C_{c_6}\}
\end{array}$
\end{center}
\caption{Color sets and transformers for the insurance claim workflow}
\label{tab:datavalues}
\end{table}

Assume that the company's policy is to accept all claims which are evaluated to a value of 3 or less without any further processing, and process all other claims. The transformers modeling this policy are given in Table \ref{tab:datavalues}, where $x\in\{{\rm A,B}\}$ and $q\in\{{\rm YES, NO, TO}\}$ unless otherwise stated. Division by 2 is assumed to be integer division.

All transformers are self-explanatory except perhaps 
\texttt{process\_complaint} and  \texttt{check\_processing}.
 In \texttt{process\_complaint}, an employee may lower the customer's estimate $k$ to a new value $v$. In \texttt{check\_processing}, a senior employee checks that the employee made no mistake (modeled by the fact that $v$ must be $k/2$ or $k$ depending on the customer group). If the check fails, an error flag is set and the processing is repeated. 

Apart from the soundness of the workflow, we 
wish to check the following property: if two customers in the 
same group register insurance complaints, then the one claiming a higher 
also receives a higher amount (notice that our ideal insurance
company does not reject any complaint). We shall use our 
reduction algorithm to check that the property holds for customers of
group A, but not for customers of group {\rm B}. 

The attentive reader may have noticed that the semantics of colored nets
allows, e.g., to take the transition \texttt{no\_processing}
even when the evaluation indicates that processing is necessary. 
This can easily be dealt with by introducing additional error values 
that are then propagated until the end. We omit them to ease the reading 
and assume that \texttt{no\_processing} and \texttt{processing} 
are taken according to the result of \texttt{evaluate}, and similarly in other cases.

\iffalse
\begin{figure}
\begin{center}
\input{tikz/travel}
\end{center}
\caption{Travel agency workflow}
\label{fig:claim}
\end{figure}

For another example, consider the workflow in a travel agency depicted in Figure : a customer can request travel information. The request is registered and it is checked whether the request comes from a new or an established customer. Thereafter an employee of the travel agency checks for opportunities and presents them to the customer. The customer can either not be interested, can ask for more options or decide to book a trip. There are two booking choices: with or without insurance. After the trip is booked, the customer can take the trip or canceling it. In the second case, depending on his choice regarding insurance, he will get some money back.

The data we will use includes a flag whether it is a new or established customer, the amount the customer has paid already and whether or not he chose to book an insurance.
\fi

\subsection{Summaries and Equivalence}
\label{subsec:summ}

Since a workflow net describes a process starting at  
$i$ and ending at $o$, it is interesting to study the input/output 
relation or {\em summary} of the whole process.

\begin{definition}[Summary and equivalence]
Let $\W$ be a colored WF net. Let $\mathcal{M}_i$ and $\mathcal{M}_o$ be the sets of initial and final colored markings of $\W$. The {\em summary} of $\W$ is the relation $S\subseteq \mathcal{M}_i \times \mathcal{M}_o$ given by: $(M_i,M_o) \in S$ if{}f $M_o$ is reachable from $M_i$. Two colored WF nets are {\em equivalent} if{}f they are both sound or both unsound, and have the same summary.
\end{definition}

Our rules aim to reduce CWF nets while preserving equivalence. 
If we are able to reduce a CWF to another one with one single transition $t$, 
then the summary is given by $\lambda(t)$, and we say that the CWF has been
{\em completely reduced} and we have {\em computed the summary}. 
Since this CWF net is obviously sound and rules preserve equivalence, 
if a CWF net can be completely reduced, then it is sound.
We prove that our rules preserve equivalence for all CWF nets, 
and give an algorithm that completely reduces all 
sound {\em free choice CWF nets}, defined below, by means of a polynomial number of rule applications.

In Section \ref{sec:procedure} we compute the summary of the free choice CWF net of Figure \ref{fig:claim}
using our reduction procedure. The result (where we write $M_i \Ra M_o$ instead of $(M_i,M_o) \in S$, and omit the error values) is:

\begin{equation*}
\begin{array}{rcl}

\{ (A,k \Ra A,k) \mid 1 \leq k \leq 10 \} & \cup & \{(B,k \Ra B,k) \mid 1\leq k\leq 3 \}  \\
 & \cup  & \{(B,k \Ra B,k/2) \mid 4\leq k\leq 10 \}
\end{array}
\end{equation*}

Since the summary  contains $(B,3 \Ra B,3)$ and $(B, 4 \Ra B, 2)$, the company policy does not satisfy 
the desired property for customers of group {\rm B}.

\subsection{Free choice Workflow Nets}

We recall the definition of free choice workflow nets \cite{desel2005free,DBLP:journals/jcsc/Aalst98}.

\begin{definition}[Free choice workflow nets]
A workflow net $\W=(P,T,F,i,o)$ is {\em free choice} (FC) if for every two places $p_1, p_2 \in P$
either $p_1^\bullet \cap p_2^\bullet = \emptyset$ or $p_1^\bullet = p_2^\bullet$.
\end{definition}

The net of Figure \ref{fig:claim} is free choice.
We also need to introduce clusters, and the new notion of free choice cluster and free choice node.

\begin{definition}[Clusters, free choice nodes] \cite{desel2005free}
Let $\W=(P,T,F,i,o)$ be a workflow net. The {\em cluster} of $x \in P \cup T$ is the unique 
smallest set $[x] \subseteq P \cup T$ satisfying: $x \in [x]$, if $p \in P \cap [x]$ then $p^\bullet \subseteq [x]$, and if $t \in T \cap [x]$, then ${}^\bullet t \subseteq [x]$. A set $X \subseteq P \cup T$ is a cluster if $X = [x]$ for some $x$. A cluster $c$ is {\em free choice} if 
$(p, t) \in F$ for every $p \in P \cap c$ and $t \in T \cap c$. 
A node $x$ is {\em free choice} if $[x]$ is a free choice cluster.
\end{definition}

The sets $\{c_3 \} \cup c_3^\bullet$ and $\{c_4, c_5\} \cup c_4^\bullet \cup c_5^\bullet$ are free choice clusters of the net of Figure \ref{fig:claim}. 
It is easy to see that clusters are equal or disjoint, and 
therefore the clusters of $\W$ are a partition of $P \cup T$. Further, we have  $[i] \cap P = \{i\}$ and $[o] = \{o\}$. Finally, we have that $\W$ is free choice if{}f all its nodes are free choice.

We say that a marking $M$ marks a cluster $c$ if it marks {\em all} places in $c$. 
Observe that if a cluster is marked, then all its transitions are enabled.  
We say that a cluster fires if one of its transitions fires.

\section{Reduction rules}
\label{sec:rules}

We present a set of three \emph{reduction rules} for CWF nets similar to those 
used for transforming finite automata into regular expressions \cite{Hopcroft:2006:IAT:1196416}.

A reduction rule, or just rule, is a binary relation on the set of CWF nets. For a rule $R$, 
we write $W_1 \by{R} W_2$ for $(W_1,W_2) \in R$. A rule $R$ is \emph{correct}
if it preserves equivalence, i.e., if $W_1 \by{R} W_2$ implies that $W_1$ and $W_2$ are equivalent. 

Given a set of rules $\mathcal{R}=\{R_1,\ldots,R_k\}$, we denote by $\mathcal{R}^\ast$ the transitive 
closure of $R_1\cup\ldots\cup R_k$. We say that $\mathcal{R}$ is \emph{complete} for a class of CWF 
nets if for every sound CWF net $\W$ in that class there is a CFW net $\W'$ consisting of a single 
transition between the two only places $i$ and $o$ such that $\W \by{\mathcal{R}^\ast} \W'$. 

We describe rules as pairs of a \emph{guard} and an \emph{action}. 
$W_1 \by{R} W_2$ holds if $W_1$ satisfies the guard, and $W_2$ is a 
possible result of applying the action to $W_1$. 

\paragraph*{Merge rule.} Intuitively, the {\em merge rule} merges two transitions with 
the same input and output places into one single transition. 

\reductionrule{Merge rule}{def:merge}
{
$\W$ contains two distinct transitions $t_1, t_2 \in T$ such that ${}^\bullet t_1 = {}^\bullet t_2$ and $t_1^\bullet = t_2^\bullet$.
}{
\vspace{-\topsep}
\begin{enumerate}[(1)]
\item $T := (T\setminus \{t_1,t_2\})\cup \{t_m\}$, where $t_m$ is a fresh name. 
\item $t_m^\bullet := t_1^\bullet$ and ${}^\bullet t_m := {}^\bullet t_1$. 
\item $\lambda (t_m) := \lambda (t_1) \cup \lambda (t_2)$.
\end{enumerate}
}

\paragraph*{Iteration rule.} Loosely speaking, the iteration rule replaces arbitrary iterations of a transition 
by a single transition with the same effect.

\reductionrule{Iteration rule}{def:iteration}
{
$\W$ contains a free choice cluster $c$ with a transition $t \in c$ such that $t^\bullet={}^\bullet t$. 
}{
\vspace{-\topsep}
\begin{enumerate}[(1)]
\item $T := (T \setminus \{t\})$. 
\item For all $t' \in c \setminus\{t\}$: $\lambda(t'):=\lambda(t)^\ast\cdot\lambda(t')$ where $\lambda(t)^\ast= \sum_{i\geq 0} \lambda(t)^i$, and $\lambda(t)^0$ is the identity relation.
\end{enumerate}
}

Observe that $\lambda(t)^\ast$ captures the fact that $t$ can be executed arbitrarily often.

\paragraph*{Shortcut rule.} The shortcut rule merges transitions of two clusters, one of which will occur 
as a consequence of the other, into one single transition with the same effect. 

\begin{definition}
A transition $t$ {\em unconditionally enables} a cluster $c$
if $c \cap P \subseteq t^\bullet$.
\end{definition}

\noindent Observe that if $t$ unconditionally enables $c$ and a marking $M$ enables $t$,
then the marking $M'$ given by $M \by{t} M'$ enables every transition in $c$.

\reductionrule{Shortcut rule}{def:shortcut}
{
$\W$ contains a transition $t$ and a free choice cluster $c\notin\{[o],[t]\}$ such that $t$ unconditionally enables $c$.
}{
\vspace{-\topsep}
\begin{enumerate}[(1)]
\item $T := (T \setminus \{t\}) \cup \{t'_s \mid t' \in c\}$, where $t'_s$ are fresh names. 
\item For all $t' \in c$: ${}^\bullet t'_s := {}^\bullet t$ and $t'_s{}^\bullet := (t^\bullet \setminus {}^\bullet t')\cup t'^\bullet$. 
\item For all $t' \in c$: $\lambda (t'_s) := \lambda (t)\cdot\lambda(t')$. 
\item If ${}^\bullet p = \emptyset$ for all $p\in c$, then remove $c$ from $\W$.
\end{enumerate}
}

We also use a restricted version of this rule, called the {\em d-shortcut rule}. This rule is obtained by adding an additional guard to the shortcut rule: $|c \cap T|=1$. This guard guarantees that the number of edges does not increase when the d-shortcut rule is applied.

Figure \ref{fig:rules} shows a sequence of reductions illustrating the 
definitions of the rules. Notice that the graphical description does 
not contain the transformer information. A second example of reduction
in which the workflow net also exhibits concurrency is shown in Section
\ref{subsec:summexample}.

\begin{theorem}
\label{thm:correctness}
The merge, shortcut and iteration rules are correct for CWF nets.
\end{theorem}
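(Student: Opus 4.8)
The plan is to handle the three rules one at a time and, in each case, to build a tight correspondence between the firing sequences of $W_1$ and those of $W_2$ that maps initial markings to initial markings, final to final, and preserves the colored effect of each sequence. Both halves of equivalence then follow from this correspondence: equality of summaries is immediate, and equality of the soundness status follows by additionally translating \emph{arbitrary} reachable markings, so that the clause ``the final marking is reachable from every reachable marking'' and the clause ``every transition is covered by some firing sequence'' transfer in both directions. For the \emph{merge} rule the correspondence is a genuine bijection: $W_1$ and $W_2$ share the same places, and replacing every occurrence of $t_1$ or $t_2$ by $t_m$ leaves all intermediate colored markings unchanged, since ${}^\bullet t_m = {}^\bullet t_1 = {}^\bullet t_2$, $t_m^\bullet = t_1^\bullet = t_2^\bullet$, and $\lambda(t_m) = \lambda(t_1) \cup \lambda(t_2)$; reachable markings therefore coincide and $t_m$ is covered iff $t_1$ or $t_2$ is.

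For the \emph{iteration} rule I would exploit the two structural facts pointed out after the rule. Because $t^\bullet = {}^\bullet t$, firing $t$ leaves the underlying uncolored marking unchanged and only rewrites the colors on the places of $c$; and because $c$ is free choice, every transition of $c$ has exactly $c \cap P$ as its preset, so $t$ is enabled precisely when the whole cluster is marked, that is, precisely when every $t' \in c$ is enabled. Consequently firings of $t$ commute with all other firings and never disturb which transitions are enabled elsewhere. Given a $W_1$-sequence that reaches the final marking, I would commute firings until each maximal block $t^k$ is immediately followed by a firing of some $t' \in c \setminus \{t\}$ --- such a $t'$ must close every block, since $t$ keeps $c$ marked whereas the final marking leaves $c$ empty --- and then fold $t^k\, t'$ into one firing of the $W_2$-transition $t'$, selecting the matching element of $\lambda(t)^k \subseteq \lambda(t)^\ast$ inside $\lambda(t)^\ast \cdot \lambda(t')$. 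The reverse direction unfolds each firing of a modified $t'$ back into a block $t^k\, t'$, and the same folding applied to arbitrary reachable markings supplies the soundness part.

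For the \emph{shortcut} rule the engine is again a commutation argument, justified this time by the cluster structure: the tokens that $t$ deposits on $c \cap P \subseteq t^\bullet$ can only ever be removed by a transition of $c$, because $p^\bullet \subseteq c$ for every place $p \in c$. In a $W_1$-sequence reaching the final marking every firing of $t$ is therefore eventually matched by a firing of some $t' \in c$ that consumes those tokens --- the guard $c \neq [o]$ guarantees that $c$ is unmarked at the end --- and by commuting the independent firings lying in between I would bring this $t'$ right after its $t$ and fold the pair $t\, t'$ into a single firing of $t'_s$, whose postset $(t^\bullet \setminus {}^\bullet t') \cup t'^\bullet$ and transformer $\lambda(t) \cdot \lambda(t')$ realise exactly the net effect of $t$ followed by $t'$. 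The reverse direction splits each $t'_s$ into $t$ then $t'$, the intermediate marking enabling $t'$ by the unconditional-enabling guard together with free choiceness; clause (4) of the rule only discards places that can no longer be marked, so it alters neither the reachable markings nor the summary.

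The main obstacle, common to the iteration and the shortcut cases, is precisely this rearrangement step: proving that the firings of the eliminated transition can always be shifted next to their partner firing without changing reachability, and that the shift can be carried out simultaneously with the choice of the right element of the iterated or composed transformer, so that the colored effect is reproduced exactly. The merge rule is essentially free; the real work lies in making these commutation-and-folding maps precise and in verifying that they restrict correctly to the reachable markings demanded by the two soundness clauses.
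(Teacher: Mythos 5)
Your overall strategy---a two-way correspondence between firing sequences of $\W_1$ and $\W_2$ that preserves the reached marking, from which both the equality of summaries and the two soundness clauses are then read off---is exactly the paper's. The paper likewise treats merge and iteration as easy and concentrates on the shortcut rule; its translation from $\W_1$ to $\W_2$ replaces each occurrence of $t$ in place by $t'_s$ and deletes the \emph{next} occurrence of a transition of $c$, which is justified by the same independence facts ($p^\bullet \subseteq c$ for every $p \in c\cap P$, and monotonicity of enabling) that underlie your commute-and-fold presentation, so these are essentially the same argument. Your device of appending a $t'$ when the last occurrence of $t$ is unmatched also appears verbatim in the paper.

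One case your blanket claim that ``every transition is covered by some firing sequence transfers in both directions'' glosses over: when clause (4) does not delete $c$, the transitions of $c$ survive in $\W_2$, yet your translation of $\W_1$-sequences systematically removes every occurrence of a $c$-transition that is matched with a preceding $t$. The translated sequences could therefore never fire any transition of $c$, and you still owe an argument that $c\cap T$ is enabled by some initial firing sequence of $\W_2$. The paper closes this as follows: since $c$ survives, some transition $t'' \neq t$ has an output place in $c\cap P$; by soundness of $\W_1$, fire $t''$ and extend to the final marking; the extension must empty $c$, so it contains an occurrence of a $c$-transition not matched by a prior occurrence of $t$, and that occurrence survives the translation; free-choiceness of $c$ then yields enabledness of all of $c \cap T$. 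With this case added, your proof coincides with the paper's.
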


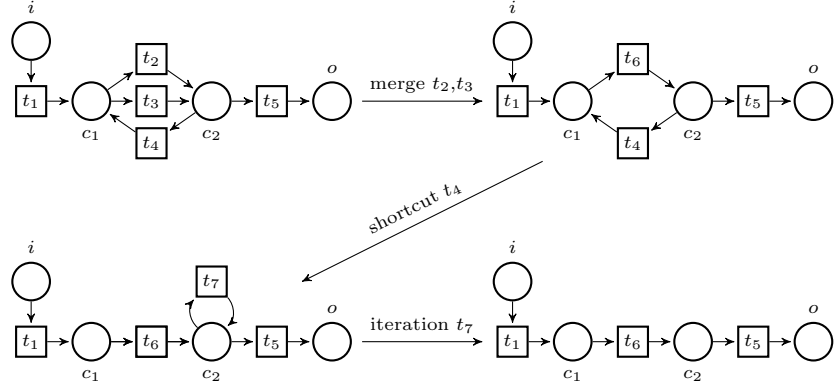
\begin{figure}
\centering
\begin{tikzpicture}[>=stealth',bend angle=45,auto,scale=0.8]
	\tikzstyle{every node}=[font=\scriptsize]
	\tikzstyle{place}=[circle,thick,draw=black,fill=white,minimum size=5mm,inner sep=0mm]
	\tikzstyle{transition}=[rectangle,thick,draw=black,fill=white,minimum size=4mm,inner sep=0mm]
	\tikzstyle{every label}=[black]
	
	\node [place] at (-1,-1) (c0){};
	\node [above=0cm of c0] {$i$};
	\node [place] at (0,-2) (c1){};
	\node [below=0cm of c1] {$c_1$};
	\node [place] at (2,-2) (c2){};
	\node [below=0cm of c2] {$c_2$};
	\node [place] at (4,-2) (c3){};
	\node [above=0cm of c3] {$o$};

	\node [transition] at (-1,-2) (t1) {$t_1$}
	edge [pre] (c0)
	edge [post] (c1);

	\node [transition] at (1,-1.3) (t2) {$t_2$}
	edge [pre] (c1)
	edge [post] (c2);

	\node [transition] at (1,-2) (t3) {$t_3$}
	edge [pre] (c1)
	edge [post] (c2);

	\node [transition] at (1,-2.7) (t4) {$t_4$}
	edge [pre] (c2)
	edge [post] (c1);	

	\node [transition] at (3,-2) (t5) {$t_5$}
	edge [pre] (c2)
	edge [post] (c3);

\draw[->] (4.5,-2) -- (6.5,-2) node [midway] {merge $t_2$,$t_3$};

\begin{scope}[shift={(8,0)}]
	\node [place] at (-1,-1) (c0){};
	\node [above=0cm of c0] {$i$};
	\node [place] at (0,-2) (c1){};
	\node [below=0cm of c1] {$c_1$};
	\node [place] at (2,-2) (c2){};
	\node [below=0cm of c2] {$c_2$};
	\node [place] at (4,-2) (c3){};
	\node [above=0cm of c3] {$o$};

	\node [transition] at (-1,-2) (t1) {$t_1$}
	edge [pre] (c0)
	edge [post] (c1);

	\node [transition] at (1,-1.3) (t2) {$t_6$}
	edge [pre] (c1)
	edge [post] (c2);

	\node [transition] at (1,-2.7) (t4) {$t_4$}
	edge [pre] (c2)
	edge [post] (c1);	

	\node [transition] at (3,-2) (t5) {$t_5$}
	edge [pre] (c2)
	edge [post] (c3);
\end{scope}

\draw[->] (7.5,-3) -- (3.5,-5) node [midway,sloped,swap,anchor=center,above] {shortcut $t_4$};

\begin{scope}[shift={(0,-4)}]
	\node [place] at (-1,-1) (c0){};
	\node [above=0cm of c0] {$i$};
	\node [place] at (0,-2) (c1){};
	\node [below=0cm of c1] {$c_1$};
	\node [place] at (2,-2) (c2){};
	\node [below=0cm of c2] {$c_2$};
	\node [place] at (4,-2) (c3){};
	\node [above=0cm of c3] {$o$};

	\node [transition] at (-1,-2) (t1) {$t_1$}
	edge [pre] (c0)
	edge [post] (c1);

	\node [transition] at (1,-2) (t2) {$t_6$}
	edge [pre] (c1)
	edge [post] (c2);

	\node [transition] at (1,-2) (t3) {$t_6$}
	edge [pre] (c1)
	edge [post] (c2);

	\node [transition] at (2,-1) (t4) {$t_7$}
	edge [pre,bend right] (c2)
	edge [post, bend left] (c2);	

	\node [transition] at (3,-2) (t5) {$t_5$}
	edge [pre] (c2)
	edge [post] (c3);
\end{scope}
% \draw[->] (12.5,-1) -- (14.5,-1) node [midway] {iteration $t_7$};
% \begin{scope}[shift={(15,0)}]
% 	\node [place] at (0,0) (c0){};
% 	\node [above=0cm of c0] {$i$};
% 	\node [place] at (0,-2) (c1){};
% 	\node [below=0cm of c1] {$c_1$};
% 	\node [place] at (2,-2) (c2){};
% 	\node [below=0cm of c2] {$c_2$};
% 	\node [place] at (2,0) (c3){};
% 	\node [above=0cm of c3] {$o$};

% 	\node [transition] at (0,-1) (t1) {$t_1$}
% 	edge [pre] (c0)
% 	edge [post] (c1);

% 	\node [transition] at (1,-2) (t2) {$t_6$}
% 	edge [pre] (c1)
% 	edge [post] (c2);

% 	\node [transition] at (2,-1) (t5) {$t_8$}
% 	edge [pre] (c2)
% 	edge [post] (c3);
% \end{scope}

\draw[->] (4.5,-6) -- (6.5,-6) node [midway,sloped,swap,anchor=center,above] {iteration $t_7$};

\begin{scope}[shift={(8,-4)}]
	\node [place] at (-1,-1) (c0){};
	\node [above=0cm of c0] {$i$};
	\node [place] at (0,-2) (c1){};
	\node [below=0cm of c1] {$c_1$};
	\node [place] at (2,-2) (c2){};
	\node [below=0cm of c2] {$c_2$};
	\node [place] at (4,-2) (c3){};
	\node [above=0cm of c3] {$o$};

	\node [transition] at (-1,-2) (t1) {$t_1$}
	edge [pre] (c0)
	edge [post] (c1);

	\node [transition] at (1,-2) (t2) {$t_6$}
	edge [pre] (c1)
	edge [post] (c2);

	\node [transition] at (3,-2) (t5) {$t_5$}
	edge [pre] (c2)
	edge [post] (c3);
\end{scope}
\end{tikzpicture}
\label{fig:rules}
\caption{Example of rule applications}
\end{figure}

\section{Reduction Procedure}
\label{sec:procedure}

We show that the rules presented in the previous section 
summarize all sound FC-CWF nets in polynomial time. The proof is very 
involved, and we can only sketch it. 

We first show that acyclic FC-CWF nets can be completely reduced.

\begin{definition}[Graph]
The graph of a CWF net is the graph $(P\cup T, F)$. A CWF net is acyclic if its graph is acyclic.
\end{definition}

\begin{theorem}
\label{thm:acycliccomplete}
The merge and d-shortcut rule are complete for acyclic FC-CWF nets.
\end{theorem}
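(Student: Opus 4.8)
The plan is to prove completeness by well-founded induction: I show that every sound acyclic FC-CWF net $\W$ that is not already the trivial net (a single transition between $i$ and $o$) admits an application of the merge or d-shortcut rule producing an equivalent net $\W'$ in the same class and strictly smaller in a suitable measure, and then invoke the induction hypothesis on $\W'$. Two of the three ingredients are routine. Equivalence (and hence preservation of soundness) is exactly Theorem~\ref{thm:correctness}. That the rules keep us inside the class I would check directly: a merge merely fuses two parallel transitions; a d-shortcut replaces $t$ by a transition $t'_s$ with ${}^\bullet t'_s={}^\bullet t$, so the postsets of the places in ${}^\bullet t=[t]\cap P$ are uniformly renamed and the complete-bipartite structure of every cluster---hence the free-choice property---is preserved, while acyclicity survives because $t'_s$ emits only to nodes ($t^\bullet\setminus{}^\bullet t'$ and $t'^\bullet$) that already lay topologically after $t$.

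The core of the argument is an existence lemma: every sound acyclic FC-CWF net with more than one transition admits a merge or a d-shortcut. I would exploit that in an FC net every cluster is complete bipartite, so ${}^\bullet t=[t]\cap P$ and $p^\bullet=[p]\cap T$. If two transitions share preset and postset, merge applies, so assume not. A d-shortcut requires a single-transition cluster $c=\{t''\}\cup{}^\bullet t''$ (the guard $|c\cap T|=1$) together with one transition $t$ producing all of $c\cap P$. Using a topological order I would locate such a pair by collapsing, from the front, maximal chains of single-place/single-transition clusters: d-shortcutting the unique producer of such a place fires step (4) of the rule and deletes the bypassed place and transition. Once all the chains feeding an AND-join have been collapsed, a single transition produces all inputs of the join, so the join's now single-transition cluster becomes a legal d-shortcut target; symmetrically, an XOR cluster is removed by d-shortcutting each alternative forward onto the common downstream cluster and then merging the resulting parallel transitions. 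The content of the lemma is that a sound FC net always presents at least one such configuration near a topologically minimal non-trivial cluster.

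For termination I would take the measure $\mu(\W)=\sum_{t\in T}\ell(t)$, where $\ell(t)$ is the number of arcs on a longest directed path from $t$ to $o$; this is well defined and positive by acyclicity and soundness (every transition reaches $o$). A merge deletes two transitions of equal $\ell$ and inserts one of the same value, so $\mu$ drops; a d-shortcut taken along a longest path shortens that path by the two arcs it bypasses, and when step (4) fires it additionally removes a transition, so $\mu$ drops as well. The step I expect to be the main obstacle is precisely the interface between the existence lemma and this measure: I must guarantee that the applicable rule can always be chosen so that $\mu$ strictly decreases---i.e. a d-shortcut that lies on a longest path or triggers a merge/step (4)---rather than one that merely reshuffles a shorter branch while $\mu$ stalls. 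Ruling out such stalling uses the genuine structural content of sound free-choice nets, namely that splits and joins are well matched, so that the branch reached by a greedy topological choice of $t$ can always be collapsed without opening an equally long detour. Making this choice precise---e.g. selecting among all applicable pairs one of minimal topological rank and proving it forces a merge or step (4)---is the delicate, computation-heavy part; once it is in place, the existence lemma together with the preservation facts and the strict decrease of $\mu$ yields, by induction, a reduction of every sound acyclic FC-CWF net to a single transition.
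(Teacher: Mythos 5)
Your overall architecture (induct on a measure, showing that any non-trivial sound acyclic FC-CWF net admits a rule application that strictly decreases the measure while preserving equivalence, soundness, free-choiceness and acyclicity) matches the paper's, and the preservation facts you dispatch via Theorem~\ref{thm:correctness} and the local structure of the rules are fine. The problem is that the two steps you explicitly defer are precisely where all the mathematical content lives, and your sketches of them would not go through as stated. For the existence lemma, the paper does not argue by ``collapsing chains from the front'' or by an informal appeal to splits and joins being well matched; it first proves (Lemmas~\ref{lemma1} and~\ref{lemma2}) that in an \emph{irreducible} sound acyclic FC net every S-component of the extended net $\oline{\W}$ contains a place of \emph{every} cluster, and consequently that every transition's postset is exactly the place set of a single cluster. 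This is derived from the S-component coverability theorem for live and bounded free choice nets, i.e.\ from genuinely nontrivial free choice theory, not from topological bookkeeping. Only with that structural fact in hand does the extremal argument work (Theorem~\ref{irredtheo}): taking a cluster $a_{\min}$ whose longest path to $o$ is minimal (note: nearest to $o$, not topologically minimal near $i$ as you propose), one shows $t^\bullet=\{o\}$ for all its transitions, which forces either a merge or a d-shortcut. Without Lemma~\ref{lemma2} your ``a single transition produces all inputs of the join'' step is exactly the unproved claim: a priori a transition may scatter its output tokens over several clusters whose remaining input places are filled elsewhere, so that no transition unconditionally enables any single-transition cluster, and your AND-join/XOR-cluster case analysis never gets off the ground.

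The second gap is the measure. Your $\mu(\W)=\sum_t\ell(t)$ with $\ell(t)$ the longest \emph{graph} path from $t$ to $o$ need not decrease under a d-shortcut: the longest path from $t$ may avoid ${}^\bullet t'$ entirely, and those branches survive in $t'_s{}^\bullet\supseteq t^\bullet\setminus{}^\bullet t'$, so $\ell(t'_s)=\ell(t)$ is possible and $\mu$ can stall. You flag this yourself, but the fix is not a cleverer choice of which pair to shortcut; it is to change the measure. The paper's ${\it Shoc}$ is based on maximal \emph{occurrence sequences} rather than graph paths: because $t$ unconditionally enables the single-transition cluster of $t'$, every maximal firing sequence containing $t$ also contains $t'$, so fusing them shortens every such sequence by one and ${\it shoc}(t'_s)<{\it shoc}(t)$ unconditionally, while no other transition's value increases. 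This semantic forcing is what makes the decrease automatic, and it is also what yields the polynomial bound of Theorem~\ref{thm:polcomp}. As it stands, your proposal is a correct plan with the two load-bearing steps missing.
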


In the cyclic case we need the notion of {\em synchronizer of a loop}.
Although a similar concept was already used in \cite{negII}, the definition
there exploits the fact that negotiations are a structured model of communicating sequential agents. Since workflow nets do not have such a structure,
we need a different definition.

\begin{definition}[Loop]
Let $\W$ be a CWF net. A non-empty transition sequence 
$\sigma$ is a {\em loop} of $\W$ if $M \by{\sigma} M$ for some 
reachable marking $M$.
\end{definition}

\begin{definition}[Synchronizer]
Let $\W$ be a WF net. A free choice transition $t$ {\em synchronizes}
a loop $\sigma$ if $t$ appears in $\sigma$ and for every 
reachable marking $M$: if $M$ enables $t$, then $M(p) = 0$ for every
$p \in (\bigcup_{t' \in\sigma, t' \neq t}{}^\bullet t')$. A free choice 
transition is a {\em synchronizer} if it synchronizes some loop.
\end{definition}

Consider the insurance claim net, replacing the part between
the places $c_7 $ and $c_9$ by Figure \ref{fig:extendedClaim}. The sequence 
\texttt{process} \texttt{check1} \texttt{check2} \texttt{combine} \texttt{processing\_NOK} is a loop. Transitions \texttt{process}, \texttt{combine}, and \texttt{processing\_NOK} are synchronizers, but \texttt{check1} and \texttt{check2} are not. We use synchronizers to define fragments of $\W$ on which to apply our rules. 

\begin{definition}[Fragment]
Let $\W$ be a CWF net and let $t$ be a synchronizer of $\W$. 
The fragment $\W_t$ contains all transitions 
appearing in all loops synchronized by $t$, together with their input
and output places, and the arcs connecting them.
\end{definition}

In our example, the fragment $\W_{\rm process}$ is exactly the net of Figure \ref{fig:extendedClaim}. Our procedure selects a synchronizer $t$ and applies the rules to $\W_t$ until, loosely speaking, all loops synchronized by $t$ are removed from the net, and $t$ is no longer a synchronizer. The next lemma shows that when no synchronizers can be found anymore, the workflow net is acyclic, and so can be completely reduced by Theorem \ref{thm:acycliccomplete}.

\begin{lemma}
\label{lem:soundCyclicSync}
Every sound cyclic FC-CWF net has at least one synchronizer.
\end{lemma}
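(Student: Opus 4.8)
The plan is to work with the underlying WF net throughout. Because every transformer $\lambda(t)$ is left-total, a control-flow marking is the support of some reachable colored marking if{}f it is reachable in the underlying WF net, and enabledness, emptiness of a place, and the relation $M \by{\sigma} M$ depend only on supports; hence "being a synchronizer'' is a property of the underlying net and it suffices to exhibit one there. So let $\W$ denote the sound cyclic FC WF net and let $\bar\W$ be its short-circuit, obtained by adding a transition $t^\ast$ with ${}^\bullet t^\ast = \{o\}$ and $t^{\ast\bullet} = \{i\}$. Soundness of $\W$ is equivalent to $\bar\W$ being live and bounded, so $\bar\W$ is a live, bounded, free choice net and is therefore covered by strongly connected S-components (state machines), each carrying a constant number of tokens. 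Liveness forces each to be marked, and since $i$ is the only initially marked place and has no input arc in $\W$, every S-component contains $i$, holds exactly one token, and (by strong connectivity) also contains $t^\ast$ and $o$; in particular $\bar\W$ is $1$-safe.

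I would next record that cyclicity yields an actual loop. Since $\bar\W$ is live and $1$-safe, any infinite firing sequence revisits some reachable marking $M$, and the infix between two visits is a loop $M \by{\sigma} M$. Using that $\W$ contains a directed cycle whose transitions are all live, I would extract such a loop $\sigma$ that avoids $t^\ast$, i.e. a loop of $\W$ itself.

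The conceptual core is a structural reformulation of the synchronizer condition. A transition $t$ synchronizes $\sigma$ precisely when every reachable marking enabling $t$ leaves all places of $Q := \bigcup_{t'\in\sigma,\, t'\neq t} {}^\bullet t'$ empty. A sufficient way to guarantee this is to find, through each input place of $t$, an S-component $S_j$ such that $\bigcup_j S_j \supseteq Q$ and each $S_j$ meets ${}^\bullet t$ in a place lying outside $Q$: then enabling $t$ puts the single token of each $S_j$ on that input place, emptying all other places of $S_j$ and hence all of $Q$. Thus the task reduces to locating, along some loop, a transition whose input-place S-components jointly cover the input places of all the other loop transitions — a point at which every token "carried'' by the loop is simultaneously funnelled into $t$.

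To produce such a transition I would choose $\sigma$ to realise a minimal T-invariant, so that the subnet on its support is a strongly connected marked graph with no internal choices, and analyse the finite family $\mathcal{S}$ of S-components whose token is moved by $\sigma$. Because $\sigma$ returns to $M$, the closed walks performed by the tokens of the $S\in\mathcal S$ must re-converge, and the decisive candidate is a transition lying on all of them — a common merge point. The main obstacle is exactly to prove that such a common synchronization transition always exists and that its input S-components cover $Q$: if the carrying S-components shared no transition, the loop would split into independent sub-loops and I would recurse on a single-S-component sub-loop, where the marked-graph structure rules out the stray side-input places that could otherwise spoil the covering. Carrying out this confluence-and-covering argument is where the free choice structure theory (S- and T-coverability, and the fact that every S-component passes through $i$) is indispensable, and it is what replaces the agent-based reasoning available for negotiations in \cite{negII}.
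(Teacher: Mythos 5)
Your reduction to the underlying WF net and the S-component setup are fine and consistent with the background the paper also uses, and your reformulation of the synchronizer condition (find a loop transition $t$ such that the S-components through ${}^\bullet t$ carry one token each and jointly cover $Q=\bigcup_{t'\in\sigma,\,t'\neq t}{}^\bullet t'$) is a correct sufficient criterion. But the proof has a genuine gap at exactly the point you yourself flag as ``the main obstacle'': you never prove that a transition lying on all token-carrying S-components of the loop exists, nor that its input place on each such component lies outside $Q$. That existence claim \emph{is} the lemma; everything before it is bookkeeping. The case split you sketch (if the carrying S-components share no transition, recurse on a single-component sub-loop) only disposes of the degenerate case, and even there you would still need to show that the projected sub-sequence is a loop fired from a \emph{reachable} marking. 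In the main case --- carrying components that do interact --- no argument is given, and a purely structural marked-graph/T-invariant analysis of the loop's support is unlikely to suffice, because the synchronizer property quantifies over \emph{all} reachable markings of the whole net, not just markings of the T-component; you also silently assume that a realizable loop with minimal T-invariant Parikh vector exists and that its support is choice-free, neither of which is immediate.

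For comparison, the paper closes this gap with a soundness-driven contradiction rather than a structural covering argument: it takes a loop $\sigma$ that is minimal in the \emph{number of transitions}, fires $\sigma$'s enabling marking forward along an occurrence sequence $\tau$ to the final marking, and nominates as synchronizer the transition $t^\ast$ of the \emph{last} cluster of $\tau$ that meets $\sigma$. If $t^\ast$ were not a synchronizer, one constructs from the offending marking an infinite firing sequence that pushes one token along $\tau$ towards $o$ while repeatedly firing $\sigma$-transitions; since $t^\ast$ can never be re-enabled, the resulting infinite run must contain a loop over a \emph{proper subset} of $\sigma$'s transitions, contradicting minimality. If you want to salvage your approach, you would need to supply a proof of the common-merge-point existence of comparable strength --- most plausibly by importing this same minimality-plus-soundness contradiction, at which point the two proofs essentially coincide.
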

\begin{proofsketch}
We first show that in every sound cyclic FC-CWF net there exists a loop. We then inspect minimal loops and show that they must include a synchronizer. The proof constructs a transition sequence that pushes one token towards the final marking while all other tokens stay inside the loop. Should no synchronizer be present in the loop, this sequence ends in a dead lock contradicting soundness.
\end{proofsketch}

Given two synchronizers $t$ and $t'$, we say $\W_t \preceq \W_t'$ if every node of $\W_t$ is also a node of $\W_t'$. The relation $\preceq$ is a partial order on fragments. We have: 

\begin{lemma}
\label{lem:syncOnly}
Let $t$ be a synchronizer of a sound FC-CWF net. If $\W_t$ is minimal 
with respect to the partial order on fragments, then all non-synchronizers
of $\W_t$ can be removed by means of applications of the d-shortcut and merge rules.
\end{lemma}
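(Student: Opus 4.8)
The plan is to reduce the non-synchronizers by importing the reduction strategy behind Theorem~\ref{thm:acycliccomplete} into the \emph{interior} of the fragment, i.e.\ the part of $\W_t$ lying strictly between two firings of synchronizers, while treating the synchronizers as fixed boundaries that must survive. The first and central step is a structural claim: in a \emph{minimal} fragment of a sound FC-CWF net, every directed cycle of the graph of $\W_t$ passes through a synchronizer, so the sub-net induced by the non-synchronizers together with their adjacent places is acyclic. I would prove this by contradiction. A cycle avoiding all synchronizers can, from a suitable reachable marking, be fired as a loop $\tau$ all of whose transitions are non-synchronizers; restricting to the nodes visited by $\tau$ yields again a sound cyclic FC-CWF net, so by Lemma~\ref{lem:soundCyclicSync} it contains a synchronizer $t'$. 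A short argument transporting the synchronizer condition from the sub-net to $\W$ (when $t'$ is enabled, the other loop places are forced to be empty, and this witness survives in the larger net) shows that $t'$ is a synchronizer of $\W$ with $\W_{t'}\preceq\W_t$ and $\W_{t'}\neq\W_t$, contradicting minimality. Hence the interior is acyclic and its non-synchronizers admit a topological order.

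With acyclicity established, I would reduce the interior exactly as in the proof of Theorem~\ref{thm:acycliccomplete}, starting from a synchronizer $s$ and repeatedly applying the \textbf{d-shortcut} rule to the current ``frontier'' transition, pushing it forward across the next non-synchronizer $t'$. Because the interior is acyclic and free choice, the frontier's output places eventually contain all the (shared) input places of a single-transition cluster $[t']$; where a choice cluster with $|[t']\cap T|>1$ is in the way, the \textbf{merge} rule is first used to collapse it so that the guard $|c\cap T|=1$ of the d-shortcut rule becomes available, and it is also used to remove the duplicate transitions a shortcut across a choice may create. Each d-shortcut replaces the frontier and $t'$ by their composite with transformer $\lambda(\cdot)\cdot\lambda(t')$, while carrying the tokens of the not-yet-serialized concurrent branches along unchanged (this is the role of the term $t^\bullet\setminus{}^\bullet t'$ in the action of the rule). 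Exactly as in the acyclic case, the concurrent branches between two synchronizers are serialized one transition at a time; this consumes every non-synchronizer of the interior. One stops the frontier as soon as the next cluster is a synchronizer, and verifies that the composite that has absorbed the non-synchronizers is again a synchronizer (its enabling marking coincides with that of $s$, which forces all other loop places to be empty), so that only synchronizers remain in $\W_t$. Throughout, $\W_t$ stays a sound FC-CWF net: equivalence is preserved by Theorem~\ref{thm:correctness}, and preservation of free-choiceness is routine from the guards of the two rules.

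The main obstacle is the interplay between the structural claim and the guarantee that a usable d-shortcut exists at every step. Three points need care. First, showing that an interior cycle really yields a synchronizer of $\W$ with a \emph{strictly} smaller fragment is where minimality and soundness are genuinely used, and where the synchronizer condition must be transported faithfully between the restricted sub-net and $\W$. Second, ensuring that merge can always reduce the relevant cluster to a single transition, so that the edge-preserving d-shortcut suffices and the edge-increasing full shortcut is never needed; here free-choiceness is essential, since it forces the transitions of a cluster to share all their input places. Third, proving that absorbing a non-synchronizer into the frontier again produces a synchronizer, so that progress is genuine, the count of non-synchronizers strictly decreases, and the synchronizer boundaries used for the topological order remain intact until all non-synchronizers are gone.
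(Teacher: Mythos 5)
Your first step --- every directed cycle of a minimal fragment passes through a synchronizer --- is exactly the structural fact the paper uses, and minimality is indeed the reason. But the argument you sketch for it is shaky: the restriction of $\W$ to the nodes visited by a loop $\tau$ is not a workflow net (it has no $i$ and $o$, need not be strongly connected in the required sense, and is not obviously sound), so you cannot invoke Lemma~\ref{lem:soundCyclicSync} on that restriction. The correct route stays inside $\W$: the cycle gives rise to a loop of $\W$ itself, a minimal such loop contains a synchronizer of $\W$ by the proof of Lemma~\ref{lem:soundCyclicSync}, and that synchronizer's fragment sits strictly below $\W_t$, contradicting minimality.

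The genuine gap is in your second step. You propose to ``reduce the interior exactly as in the proof of Theorem~\ref{thm:acycliccomplete}'', but that theorem and the lemmas behind it (Lemmas~\ref{lemma1} and~\ref{lemma2}) are statements about sound acyclic FC-CWF \emph{workflow nets}: their proofs rest on soundness of the whole net and on S-component coverability of the extended net $\oline{\W}$. The interior of a fragment is none of these things --- it has several exit clusters, it is embedded in a larger cyclic net, and its clusters and reachable markings are not those of a stand-alone workflow net --- so there is no licence to run a ``frontier-pushing'' version of the acyclic argument on it, and your claim that a usable merge or d-shortcut always exists is unsupported. The paper closes precisely this gap with a construction you are missing: it builds an \emph{auxiliary} acyclic workflow net from the fragment (duplicate the places of $[t]$, add fresh places $i'$ and $o'$, redirect the transitions feeding $[t]$ to the copy, replace every other synchronizer's cluster by a transition into $o'$, prune unreachable clusters), applies Theorem~\ref{thm:acycliccomplete} to that net as a black box, and then replays the resulting rule sequence on $\W$, omitting the steps that shortcut the artificial transitions into $o'$. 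That packaging is where the real work of the lemma lies, and it is absent from your three ``points needing care''.
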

\begin{proofsketch}
Intuitively, synchronizers are points where loops begin and end. For two distinct synchronizers of a {\em minimal} fragment, any occurrence sequence starting from the marking enabling one of them, ending in the marking enabling the other, and in which no other synchronizers occur, is acyclic. Thus we can reduce the possible paths from one synchronizer to another to a single transition using our rules. We do so by constructing auxiliary acyclic workflow nets and reducing those, applying the same reduction rules to our original net.
\end{proofsketch}

In our example, the fragment of Figure \ref{fig:extendedClaim} on the left is reduced
to the synchronizer-only fragment shown in Figure \ref{fig:extendedClaim} on the right. 
In such a fragment, a marking always marks exactly the places of one of 
the clusters, and nothing else. Intuitively, the
synchronizer-only fragment is an {\em S-net}, i.e., a net where 
every transition has exactly one input and one output place, but 
in which some places are {\em duplicated}. 
Figure \ref{fig:rules} shows an example of an S-net, while the net on the 
right of Figure \ref{fig:extendedClaim}  is an S-net in which place $c_{10}$ is 
duplicated in place $c_{11}$.

When reducing S-nets we must be careful that the shortcut rule does not ``run into cycles''. Consider for instance the second net in Figure \ref{fig:rules}. 
If instead of shortcutting $t_4$ we shortcut $t_1$, we obtain a new 
transition $t_7$ with $i$ and $c_2$ as input and output place.
If we now shortcut $t_7$, we return to the original net with an additional transition connecting $i$ and $o$. This problem is solved by imposing an (arbitrary) total order on the clusters. Using this order we classify transitions as ``forward'' (leading to a greater cluster) and ``backward'' (leading to a smaller cluster). Running into cycles is avoided by only applying the shortcut rule to the backward transition leading to a minimal cluster.
Ultimately, this procedure reduces the fragment to an acyclic net. The total number of synchronizers is thus reduced, until none are left. At this point, by Lemma \ref{lem:soundCyclicSync} the net is acyclic,
and Theorem \ref{thm:acycliccomplete} can be applied.  The complete reduction  algorithm is listed as Algorithm 1. The algorithm contains several points where the computation might end if some condition is fulfilled. If the net was free choice, we can then conclude that it is unsound.

\begin{algorithm}[t]
\caption{Reduction procedure for cyclic workflow nets $\W$}
\begin{algorithmic}[1]
\While{$\W$ is cyclic}
\State $c \gets$ a minimal synchronizer of $\W$  \Comment{If there is none, return}
\State $F \gets$ the fragment of $c$  \Comment{If fragment is malformed, return}
\While{$F$ contains non-synchronizers}
\State apply the merge rule exhaustively
\State apply the iteration rule exhaustively
\State apply the d-shortcut rule to $F$ \Comment{If not possible, return}
\EndWhile
\State fix a total order on $F$
\While{$F$ is cyclic}
\State apply the merge rule exhaustively
\State apply the iteration rule exhaustively
\State \algbox{apply the shortcut rule to the backward transition which ends at a minimal cluster}
\EndWhile
\EndWhile
\While{$\W$ is not reduced completely}
\State apply the merge rule exhaustively
\State apply the d-shortcut rule to $F$ \Comment{If neither was possible, return}
\EndWhile
\end{algorithmic}
\end{algorithm}

We have not yet discussed why a fragment could be malformed as mentioned in Line 3 of the algorithm. The proof that every minimal loop has a synchronizer also shows something more: tokens can only exit a loop at a cluster that contains a synchronizer, and all tokens exit the loop at the same time. Thus when we compute a fragment and find transitions that lead out of the fragment and whose cluster does not contain a synchronizer, or transitions that partially end outside and partially inside the fragment, we can already conclude that the net is unsound. For more information on how to compute fragments, see the next section.

With some analysis on the number of rule application in the acyclic case as well as the S-net case, we can bound the number of rule application to be polynomial:

\begin{theorem}
\label{thm:runtimeCyclic}
Every sound FC-CWF net can be summarized in at most $\mathcal{O}(|C|^4\cdot |T|)$ shortcut rule applications and $\mathcal{O}(|C|^4+|C|^2\cdot|T|)$ merge rule applications where $C$ is the set of clusters of the net.
Any unsound FC-CWF net can be recognized as unsound in the same time.
\end{theorem}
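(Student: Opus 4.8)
The plan is to bound the number of rule applications performed by Algorithm 1 by analysing it phase by phase and summing the contributions of the outer loop over synchronizers, the two inner loops that process each fragment, and the concluding acyclic reduction. Three structural invariants drive the whole count. First, no rule ever creates a new cluster, so the number of clusters stays bounded by $|C|$ throughout the run. Second, the merge rule strictly decreases the number of transitions, and the d-shortcut rule (with its guard $|c\cap T|=1$) replaces one transition by one transition and hence never increases the transition count; only the full shortcut rule can make the net grow. Third, each iteration of the outer \texttt{while} loop ends with an acyclic fragment and therefore strictly decreases the number of synchronizers of $\W$ while creating none, so the outer loop runs at most as many times as there are synchronizers, which is $\mathcal{O}(|C|)$.

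Next I would bound the work inside one outer iteration. The first inner loop removes all non-synchronizers of the minimal fragment; by Lemma \ref{lem:syncOnly} this is always possible using merge and d-shortcut, and since d-shortcut preserves and merge decreases the transition count, it costs $\mathcal{O}(|T|)$ applications of each. The resulting synchronizer-only fragment is an S-net, which the second inner loop turns acyclic by classical state elimination: the fixed total order together with the rule ``shortcut only the backward transition into a \emph{minimal} cluster'' eliminates the clusters one at a time, while the interleaved iteration rule first collapses every self-loop. Eliminating one cluster applies the shortcut rule once per incoming backward transition, and the exhaustive merge that follows caps the number of live transitions (in an S-net at most one per input/output-place pair), which is what keeps the count polynomial. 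Charging $\mathcal{O}(|C|)$ eliminated clusters per fragment, a shortcut cost bounded by the capped transition count, and the $\mathcal{O}(|C|)$ outer iterations yields the $\mathcal{O}(|C|^4\cdot|T|)$ shortcut figure, with the merges accounted the same way and the final acyclic phase (Theorem \ref{thm:acycliccomplete}, with a depth-in-the-DAG potential) contributing only a dominated $\mathcal{O}(|C|\cdot|T|)$ d-shortcuts and $\mathcal{O}(|C|^2)$ merges; the precise exponents in $\mathcal{O}(|C|^4+|C|^2\cdot|T|)$ merges then drop out of the same bookkeeping.

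I expect the main obstacle to be exactly the growth caused by the full shortcut rule in the S-net phase: a per-iteration argument that merely counts transitions does not even terminate, since each shortcut can increase their number. I would resolve this in two steps. The quantitative step is to prove that exhaustive merging after each shortcut bounds the live transition count throughout (one transition per input/output pair in the S-net), so that each cluster elimination pays a polynomial price that cannot compound. The qualitative step is to prove that the total-order discipline prevents a transition, once shortcut past a minimal cluster, from ever being shortcut past that cluster again — this is precisely what rules out the ``running into cycles'' phenomenon illustrated in Figure \ref{fig:rules}. Together these give a global decreasing measure (morally, the ordered pairs of a transition with the clusters still reachable above it), and verifying that this measure is monotone under the exact interleaving of merge, iteration and minimal-backward shortcut prescribed by Algorithm 1 is the delicate point of the whole argument.

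Finally, for the unsoundness claim I would argue that the \texttt{return} points of Algorithm 1 are the only ways the procedure can fail to collapse the net to a single transition. A sound FC-CWF net never reaches them: Lemma \ref{lem:soundCyclicSync} supplies a synchronizer whenever the net is still cyclic, Lemma \ref{lem:syncOnly} guarantees the first inner loop succeeds, and Theorem \ref{thm:acycliccomplete} guarantees the concluding acyclic reduction completes. Hence if, during a run bounded by the counts above, the algorithm hits any \texttt{return}, the input cannot have been sound and we report \emph{unsound}. Since the bounds of the previous paragraphs rely only on the invariants on clusters and transition counts and not on eventual success, reaching such a point still costs at most the stated number of applications, so unsound nets are recognized within the same $\mathcal{O}(|C|^4\cdot|T|)$ shortcut and $\mathcal{O}(|C|^4+|C|^2\cdot|T|)$ merge applications, which completes the proof.
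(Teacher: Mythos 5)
Your overall strategy --- analysing Algorithm 1 phase by phase, charging the outer loop once per synchronizer, using a lexicographic measure over the fixed cluster order to tame the S-net phase, and a potential for the final acyclic reduction --- is essentially the paper's, and your treatment of the unsoundness claim via the \texttt{return} points is also how the paper argues. However, there are two genuine gaps in the quantitative bookkeeping. First, in the first inner loop you bound the number of d-shortcut applications by $\mathcal{O}(|T|)$ ``since d-shortcut preserves the transition count''; but a rule that merely preserves a quantity can a priori be applied forever, so this observation yields no bound at all. The paper instead bounds the d-shortcuts on the auxiliary acyclic nets of Lemma \ref{lem:syncOnly} by the potential ${\it Shoc}$ (the sum over transitions of the length of a longest maximal occurrence sequence containing them, as in Theorem \ref{thm:polcomp}), which strictly decreases with every d-shortcut and gives $\mathcal{O}(|C_F|\cdot|T_F|)$ applications per fragment --- a bound your argument needs and does not supply.

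Second, you never track the growth of the \emph{global} transition count. When the shortcut rule is applied in the S-net phase to a cluster that also has transitions leading outside the fragment, new transitions with targets outside the fragment are created; your cap of ``one transition per input/output pair inside the S-net'' does not control these. The paper bounds the number of distinct such transitions over the whole run by $|C|\cdot|T|$, and this inflated count must be fed into every later phase: it is why $|T_F|$ and $|T_A|$ are bounded by $|C|\cdot|T|$ rather than $|T|$, and it is the source of the $|C|^2\cdot|T|$ term in the merge bound. Without it, your figures for the final acyclic phase ($\mathcal{O}(|C|\cdot|T|)$ d-shortcuts, $\mathcal{O}(|C|^2)$ merges) are computed against the wrong transition count, and the claim that ``the precise exponents drop out of the same bookkeeping'' cannot actually be completed to reach the stated bounds.
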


\subsection{Summarizing the example}
\label{subsec:summexample}

We illustrate our algorithm on the example of the insurance claim of Figure \ref{fig:claim}. To better illustrate our approach, we replace the part between
the places $c_7 $ and $c_9$ by Figure \ref{fig:extendedClaim}. 

Our algorithm begins by checking whether $\W$ is cyclic and finds a minimal synchronizer. This could in our example be $c_7$, its fragment is exactly the part of the net depicted in Figure \ref{fig:extendedClaim} on the left. Since the fragment contains non-synchronizers $c_{10},c_{11}$, the while loop of Line 4 is entered. The d-shortcut rule is applied to \texttt{check1} and \texttt{check2}. The resulting fragment is depicted in Figure \ref{fig:extendedClaim} on the right. This fragment consists only of synchronizers and thus the while loop ends. We fix as total order $[c_7] \prec [c_{10}]\prec [c_9]$.

Transition \texttt{processing\_NOK} is a backward transition as its post-set $[c_7]$ is smaller than its pre-set $[c_9]$ according to the total order. It is shortcut resulting in another backward transition ending in the cluster containing $c_{10},c_{11}$, which is then shortcut again to a self-loop on $c_9$. The self-loop is removed via the iteration rule.

The resulting net is depicted in Figure \ref{fig:extended2}. This net is acyclic, thus now the d-shortcut and merge rule are applied exhaustively. An intermediate step is depicted in Figure \ref{fig:extended3}. First \texttt{process\_questionnaire} and \texttt{time\_out} are merged and the path from $i$ to $c_5$ is shortcut. Then the linear path from $c_7$ to $o$ is shortcut into a single transition. Next the path from $i$ to $c_4$ is shortcut, resulting in the transition \texttt{register} to unconditionally enable \texttt{no\_processing} and \texttt{processing\_required}. Finally, with three more shortcuts and a merge, the net is completely reduced, and we obtain the transformer shown in Section \ref{subsec:summ}.

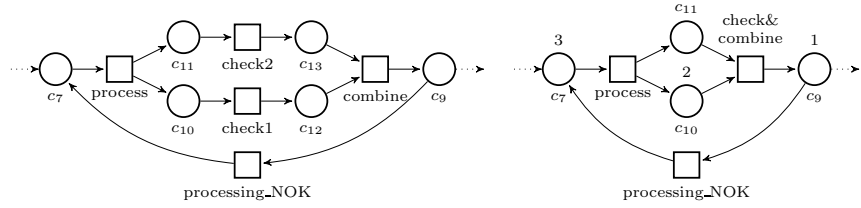
\begin{figure}[tb]
\centering
\scalebox{0.85}{
\begin{tikzpicture}[>=stealth',bend angle=45,auto,scale=1]
	\tikzstyle{every node}=[font=\scriptsize]
	\tikzstyle{place}=[circle,thick,draw=black,fill=white,minimum size=5mm,inner sep=0mm]
	\tikzstyle{transition}=[rectangle,thick,draw=black,fill=white,minimum size=4mm,inner sep=0mm]
	\tikzstyle{every label}=[black]
	
	\node [place] at (3,-1.5) (c7){};
	\node [below=0cm of c7] {$c_7$};
	\node [place] at (5,-2) (c10){};
	\node [below=0cm of c10] {$c_{10}$};
	\node [place] at (5,-1) (c11){};
	\node [below=0cm of c11] {$c_{11}$};
	\node [place] at (7,-2) (c12){};
	\node [below=0cm of c12] {$c_{12}$};
	\node [place] at (7,-1) (c13){};
	\node [below=0cm of c13] {$c_{13}$};
	\node [place] at (9,-1.5) (c9){};
	\node [below=0cm of c9] {$c_9$};

	\draw[dotted,->] (2.3,-1.5) -- (c7);
	\draw[dotted,->] (c9) -- (9.7,-1.5);

	\node [transition] at (4,-1.5) (t8) {}
	edge [pre] (c7)
	edge [post] (c10)
	edge [post] (c11);
	\node [below=0cm of t8] {process};

	\node [transition] at (6,-2) (t9) {}
	edge [pre] (c10)
	edge [post] (c12);
	\node [below=0cm of t9] {check1};

	\node [transition] at (6,-1) (t10) {}
	edge [pre] (c11)
	edge [post] (c13);
	\node [below=0cm of t10] {check2};

	\node [transition] at (8,-1.5) (t11) {}
	edge [pre] (c12)
	edge [pre] (c13)
	edge [post] (c9);
	\node [below=0cm of t11] {combine};

	\node [transition] at (6,-3) (t12) {}
	edge [pre,bend right=20] (c9)
	edge [post,bend left=20] (c7);
	\node [below=0cm of t12] {processing\_NOK};

\end{tikzpicture}
}
\scalebox{0.85}{
\begin{tikzpicture}[>=stealth',bend angle=45,auto,scale=1]
	\tikzstyle{every node}=[font=\scriptsize]
	\tikzstyle{place}=[circle,thick,draw=black,fill=white,minimum size=5mm,inner sep=0mm]
	\tikzstyle{transition}=[rectangle,thick,draw=black,fill=white,minimum size=4mm,inner sep=0mm]
	\tikzstyle{every label}=[black]
	
	\node [place] at (3,-1.5) (c7){};
	\node [below=0cm of c7] {$c_7$};
	\node [above =0cm and 0cm of c7] {3};
	\node [place] at (5,-2) (c10){};
	\node [below=0cm of c10] {$c_{10}$};
	\node [above =0cm and 0cm of c10] {2};
	\node [place] at (5,-1) (c11){};
	\node [above=0cm of c11] {$c_{11}$};
	% \node [above =0cm and 0cm of c11] {2};
	\node [place] at (7,-1.5) (c9){};
	\node [below=0cm of c9] {$c_9$};
	\node [above =0cm and 0cm of c9] {1};

	\draw[dotted,->] (2.3,-1.5) -- (c7);
	\draw[dotted,->] (c9) -- (7.7,-1.5);

	\node [transition] at (4,-1.5) (t8) {}
	edge [pre] (c7)
	edge [post] (c10)
	edge [post] (c11);
	\node [below=0cm of t8] {process};

	\node [transition] at (6,-1.5) (t11) {}
	edge [pre] (c10)
	edge [pre] (c11)
	edge [post] (c9);
	\node [above=0cm of t11] {\begin{tabular}{c}
		check\&\\combine
	\end{tabular}};

	\node [transition] at (5,-3) (t12) {}
	edge [pre,bend right=20] (c9)
	edge [post,bend left=20] (c7);
	\node [below=0cm of t12] {processing\_NOK};

\end{tikzpicture}
}
\caption{Extension of the insurance claim net and the synchronizer-only fragment}
\label{fig:extendedClaim}
\end{figure}

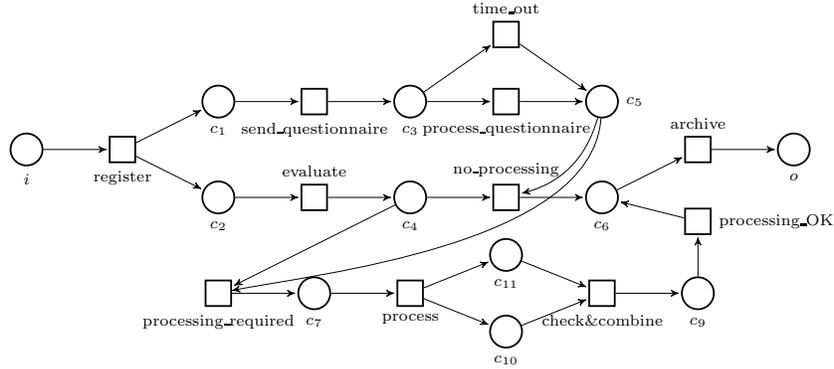
\begin{figure}[tb]
\centering
\scalebox{0.85}{
\begin{tikzpicture}[>=stealth',bend angle=45,auto,scale=1.5]
	\tikzstyle{every node}=[font=\scriptsize]
	\tikzstyle{place}=[circle,thick,draw=black,fill=white,minimum size=5mm,inner sep=0mm]
	\tikzstyle{transition}=[rectangle,thick,draw=black,fill=white,minimum size=4mm,inner sep=0mm]
	\tikzstyle{every label}=[black]
	
	\node [place] at (0,0) (i){};
	\node [below=0cm of i] {$i$};
	\node [place] at (2,0.5) (c1){};
	\node [below=0cm of c1] {$c_1$};
	\node [place] at (2,-0.5) (c2){};
	\node [below=0cm of c2] {$c_2$};
	\node [place] at (4,0.5) (c3){};
	\node [below=0cm of c3] {$c_3$}; 
	\node [place] at (4,-0.5) (c4){};
	\node [below=0cm of c4] {$c_4$};
	\node [place] at (6,0.5) (c5){};
	\node [right=0cm of c5] {$c_5$};
	\node [place] at (6,-0.5) (c6){};
	\node [below=0cm of c6] {$c_6$};
	\node [place] at (3,-1.5) (c7){};
	\node [below=0cm of c7] {$c_7$};
	\node [place] at (5,-1.9) (c10){};
	\node [below=0cm of c10] {$c_{10}$};
	\node [place] at (5,-1.1) (c11){};
	\node [below=0cm of c11] {$c_{11}$};
	\node [place] at (7,-1.5) (c9){};
	\node [below=0cm of c9] {$c_9$};
	\node [place] at (8,0) (o){};
	\node [below=0cm of o] {$o$};

	\node [transition] at (1,0) (t1) {}
	edge [pre] (i)
	edge [post] (c1)
	edge [post] (c2);
	\node [below=0cm of t1] {register};

	\node [transition] at (3,0.5) (t2) {}
	edge [pre] (c1)
	edge [post] (c3);
	\node [below=0cm of t2] {send\_questionnaire};

	\node [transition] at (5,0.5) (t3) {}
	edge [pre] (c3)
	edge [post] (c5);
	\node [below=0cm of t3] {process\_questionnaire};

	\node [transition] at (5,1.2) (t4) {}
	edge [pre] (c3)
	edge [post] (c5);
	\node [above=0cm of t4] {time\_out};

	\node [transition] at (3,-0.5) (t5) {}
	edge [pre] (c2)
	edge [post] (c4);
	\node [above=0cm of t5] {evaluate};

	\node [transition] at (5,-0.5) (t6) {}
	edge [pre] (c4)
	edge [pre,bend right=30] (c5)
	edge [post] (c6);
	\node [above=0cm of t6] {no\_processing};

	\node [transition] at (2,-1.5) (t7) {}
	edge [pre] (c4)
	edge [post] (c7);
	\node [below=0cm of t7] {processing\_required};

	\draw[->] (c5) .. controls (5.9, -1) and (3, -1.3).. (t7);

	\node [transition] at (4,-1.5) (t8) {}
	edge [pre] (c7)
	edge [post] (c10)
	edge [post] (c11);
	\node [below=0cm of t8] {process};

	\node [transition] at (6,-1.5) (t11) {}
	edge [pre] (c10)
	edge [pre] (c11)
	edge [post] (c9);
	\node [below=0cm of t11] {check\&combine};

	\node [transition] at (7,-0.75) (t11) {}
	edge [pre] (c9)
	edge [post] (c6);
	\node [right=0cm of t11] {processing\_OK};

	\node [transition] at (7,0) (t12) {}
	edge [pre] (c6)
	edge [post] (o);
	\node [above=0cm of t12] {archive};

\end{tikzpicture}
}
\caption{After shortcutting backward transitions}
\label{fig:extended2}
\end{figure}

\begin{figure}[tb]
\centering
\scalebox{0.85}{
\begin{tikzpicture}[>=stealth',bend angle=45,auto,scale=1.5]
	\tikzstyle{every node}=[font=\scriptsize]
	\tikzstyle{place}=[circle,thick,draw=black,fill=white,minimum size=5mm,inner sep=0mm]
	\tikzstyle{transition}=[rectangle,thick,draw=black,fill=white,minimum size=4mm,inner sep=0mm]
	\tikzstyle{every label}=[black]
	
	\node [place] at (0,0) (i){};
	\node [below=0cm of i] {$i$};
	\node [place] at (2,-0.5) (c2){};
	\node [below=0cm of c2] {$c_2$};
	\node [place] at (4,-0.5) (c4){};
	\node [below=0cm of c4] {$c_4$};
	\node [place] at (4,0.5) (c5){};
	\node [below=0cm of c5] {$c_5$};
	\node [place] at (6,0.5) (c6){};
	\node [below=0cm of c6] {$c_6$};
	\node [place] at (6,-0.5) (c7){};
	\node [below=0cm of c7] {$c_7$};
	\node [place] at (8,0) (o){};
	\node [below=0cm of o] {$o$};

	\node [transition] at (1,0) (t1) {}
	edge [pre] (i)
	edge [post] (c5)
	edge [post] (c2);
	\node [below=0cm of t1] {\begin{tabular}{c}register\&\\questionnaire\end{tabular}};

	\node [transition] at (3,-0.5) (t5) {}
	edge [pre] (c2)
	edge [post] (c4);
	\node [above=0cm of t5] {evaluate};

	\node [transition] at (5,0.5) (t6) {}
	edge [pre] (c4)
	edge [pre] (c5)
	edge [post] (c6);
	\node [above=0cm of t6] {no\_processing};

	\node [transition] at (5,-0.5) (t7) {}
	edge [pre] (c4)
	edge [pre] (c5)
	edge [post] (c7);
	\node [below=0cm of t7] {processing\_required};

	\node [transition] at (7,-0.5) (t8) {}
	edge [pre] (c7)
	edge [post] (o);
	\node [below=0cm of t8] {process\&archive};

	\node [transition] at (7,0.5) (t12) {}
	edge [pre] (c6)
	edge [post] (o);
	\node [above=0cm of t12] {archive};

\end{tikzpicture}
}
\caption{After some rule applications}
\label{fig:extended3}
\end{figure}
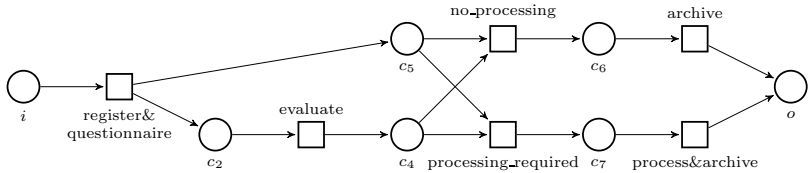

\subsection{Extension to generalized soundness}

In \cite{DBLP:conf/apn/HeeSV04,DBLP:conf/apn/2004} (see also \cite{van2011soundness}), 
two alternative notions of soundness are introduced: $k$-soundness and generalized soundness. 
We show that for free choice workflow nets they coincide with the standard notion. 
Therefore, our rules are also complete with respect to these alternative notions. 

\begin{definition}
Let $\W=(P,T,F,i,o)$ be a workflow net. For every $k \geq 1$, let
$i^k$ ($o^k$) denote the marking that puts $k$ tokens on $i$ (on $o$), and no tokens elsewhere. $\W$ is {\em $k$-sound} if $o^k$ is reachable from every marking reachable from $i^k$. $\W$ is {\em generalized sound} if it is $k$-sound for every $k \geq 1$.
\end{definition}

\begin{theorem}
\label{thm:kSound}
Let $\W$ be a free choice workflow net. The following statements are equivalent: (1) $W$ is sound; (2) $W$ is $k$-sound for some $k \geq 1$; (3) $W$ is generalized sound.
\end{theorem}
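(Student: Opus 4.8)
The plan is to reduce all three statements to classical results about \emph{live and bounded} free choice nets via short-circuiting, and then to exploit the S-coverability of well-formed free choice nets to move freely between the different token numbers $k$. Let $\overline{\W}$ denote the short-circuited net obtained from $\W=(P,T,F,i,o)$ by adding a fresh transition $t^\ast$ with ${}^\bullet t^\ast=\{o\}$ and $t^{\ast\bullet}=\{i\}$. Since $(P\cup T, F\cup(o,i))$ is strongly connected, $\overline{\W}$ is strongly connected, and since $o^\bullet=\{t^\ast\}$ (a singleton cluster) the net $\overline{\W}$ is again free choice. I would first record the easy half of the bridge to net theory: if $(\overline{\W}, i^k)$ is live and bounded, then $\W$ is $k$-sound, because liveness guarantees that $t^\ast$ can always be made to fire and boundedness forces proper completion, so $o^k$ stays reachable from every marking reachable from $i^k$. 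For $k=1$ I would invoke van der Aalst's characterization, which matches exactly statement (1): $\W$ is sound (option-to-complete together with no dead transitions) if and only if $(\overline{\W}, i^1)$ is live and bounded.

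The technical heart is an S-component lemma. By the S-coverability theorem, a well-formed free choice net is covered by strongly connected S-components (state-machine subnets), and by Commoner's theorem its markings are live if{}f every proper siphon contains a marked trap; in the strongly connected well-formed case the minimal siphons are exactly the place sets of the S-components, and each is a trap, so liveness is equivalent to every minimal siphon being marked. The key observation is then: whenever $(\overline{\W}, i^j)$ is live for some $j\geq 1$, the marking $i^j$ marks only the place $i$, so every minimal siphon must contain $i$; equivalently, \emph{every} S-component of $\overline{\W}$ contains $i$ (and, since ${}^\bullet i=\{t^\ast\}$, also $t^\ast$ and $o$).

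This immediately transfers liveness and boundedness across all $k$. Each S-component is a state machine, so its total token count is invariant; since every S-component contains $i$ and all tokens of $i^k$ sit on $i$, each S-component carries exactly $k$ tokens in every reachable marking. Hence $(\overline{\W}, i^k)$ is $k$-bounded (the S-components cover all places), and because $k\geq 1$ every minimal siphon is marked, so $i^k$ is live. Therefore, once any single $i^j$ is known to be live and bounded, $(\overline{\W}, i^k)$ is live and bounded for \emph{all} $k\geq 1$. Assembling the implications: (1)$\Rightarrow$(3) follows because soundness gives $(\overline{\W}, i^1)$ live and bounded, hence $(\overline{\W}, i^k)$ live and bounded, hence $\W$ is $k$-sound for every $k$, i.e.\ generalized sound; (3)$\Rightarrow$(2) is immediate; and (2)$\Rightarrow$(1) follows by obtaining $(\overline{\W}, i^k)$ live and bounded from $k$-soundness, applying the transfer lemma at $k=1$ to get $(\overline{\W}, i^1)$ live and bounded, and concluding soundness by van der Aalst.

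The main obstacle is precisely the step (2)$\Rightarrow$(1), namely upgrading the definitional $k$-soundness (mere option-to-complete to $o^k$) to the full statement that $(\overline{\W}, i^k)$ is live and bounded, since the $k$-soundness definition does not itself mention dead transitions or boundedness, and without these I cannot even invoke S-coverability. This is exactly where the free choice hypothesis is indispensable: using the deadlock/trap (Commoner) characterization one argues that a reachable marking emptying a minimal siphon would be unable to complete to $o^k$, which rules out both dead transitions and, through the invariance of S-component token counts, unbounded behaviour. I expect this to be the only place requiring genuine free choice structure; the remaining implications are then routine consequences of the S-component transfer argument above.
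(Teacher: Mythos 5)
Your high-level architecture (short-circuit the net, reduce everything to liveness and boundedness of $(\overline{\W},i^k)$, transfer across $k$ via S-components) is close in spirit to the paper's, and your transfer lemma --- once $(\overline{\W},i^j)$ is live and bounded for some $j$, every S-component contains $i$, hence carries exactly $k$ tokens under $i^k$, giving boundedness and, via Commoner's theorem, liveness for every $k$ --- is a legitimate alternative to the paper's citation that adding tokens to a live and bounded marking of a free choice net preserves liveness and boundedness. However, there is a genuine gap at the central step, the ``bridge'' asserting that liveness and boundedness of $(\overline{\W},i^k)$ imply $k$-soundness. For $k=1$ this is van der Aalst's theorem, and the standard proof works because a marking enabling $t^\ast$ that is not exactly $o$ forces unboundedness. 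For $k\geq 2$ that argument collapses: a marking enabling $t^\ast$ is typically $o$ plus whatever the other $k-1$ ``instances'' are doing, which is perfectly compatible with boundedness, so ``boundedness forces proper completion'' does not deliver reachability of the single marking $o^k$ from \emph{every} reachable marking. What is really needed is that $i^k$ is a \emph{home marking}, and this is exactly how the paper closes the gap: it shows that $i$ marks every proper trap of $\overline{\W}$ (using a firing sequence $i \Rightarrow i$ covering all transitions, obtained from soundness), concludes by the Home Marking Theorem that $i^k$ is a home marking of $(\overline{\W},i^k)$, and then converts a return to $i^k$ into an arrival at $o^k$ by deleting the trailing occurrences of $t^\ast$. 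Nothing in your proposal plays the role of this trap/home-marking argument.

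A secondary issue is $(2)\Rightarrow(1)$, which you correctly flag as the hard direction but whose sketch leans on S-coverability and S-component token invariants; those are only available once the net is known to be well-formed, which is what you are trying to establish, so the sketch is circular as stated. The paper avoids this by reading boundedness and deadlock-freeness of $(\overline{\W},i^k)$ directly off the definition of $k$-soundness, invoking the theorem that a bounded, strongly connected, deadlock-free free choice net is live, and then descending to $k=1$ via Commoner's theorem (since $i$ and $i^k$ mark the same traps) together with the boundedness theorem for live free choice nets. I would recommend adopting the home-marking mechanism for the bridge and the deadlock-freeness route for $(2)\Rightarrow(1)$; the rest of your plan then goes through.
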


% Interestingly, we do not yet know if our rules preserve both $k$-soundness {\em and} the summary. Observe that the notion of summary has to be generalized from a relation
% $S \subseteq {\cal M}_i \times {\cal M}_o$ to a relation $S \subseteq {\cal M}_i^k  \times {\cal M}_o^k$. We leave this question for further research.

\section{Experimental evaluation}
\label{sec:experiments}

\setlength\textfloatsep{10pt}

We have implemented our reduction algorithm and applied it to a benchmark suite of 
models previously studied in \cite{van2007verification,fahland2009instantaneous}.

\footnote{Nets can be obtained under \url{http://svn.gna.org/viewcvs/*checkout*/service-} \url{tech/trunk/_meta/nets/challenge/} in folders \texttt{sap-reference} and \texttt{ibm-soundness}}
The most complex part of the implementation\footnote{Can be obtained under \url{https://www7.in.tum.de/tools/workflow/index.php}} is the computation of synchronizers and their fragments. A crucial point is that we are only interested in fragments that consist of free choice places as those are the fragments we might be able to completely reduce.
The computation of the synchronizers starts with an overapproximation: starting from a cluster $c$, we begin by marking for all transitions $t \in c_T$, the places in $t^\bullet$ that are free choice as visited. Whenever we have marked all places in a cluster as visited, we repeat the same for this cluster. In that way we overapproximate the set of clusters that can occur in an occurrence sequence as in the definition of synchronizer. Should all places in $c$ be marked as visited at some point, we consider $c$ a potential synchronizer.

We now compute the fragment of $c$ in a backwards fashion. Starting with only $c$, we check for every transition whose out-places are contained
 in the currently identified fragment, whether its in-places 
were completely marked in the first step. If so, add its in-places and the transition to the fragment.
We also check simple soundness properties, e.g. that no transition exists which 
starts in the fragment and ends partially inside and partially outside the fragment.

\begin{table}[tb]
\def\arraystretch{1.4}%
\setlength\tabcolsep{5pt}
\setlength\abovecaptionskip{10pt}
\centering
\scalebox{0.9}{
\begin{tabular}{l|r|r|r|r|r|r|r|r|r}
 & \multicolumn{1}{c|}{\#} & \multicolumn{3}{c|}{$|P|$} & \multicolumn{3}{c|}{$|T|$} & red. & \# rule \\
 & nets & avg.&med. & max & avg.& med. & max & by & appl.\\
\hline
Acyclic FC sound & 446 & 20.7 & 13 & 154 & 13.1 & 9 & 95 & --- & 12.8 \\
Acyclic FC uns. & 761 & 60.4 & 49 & 264 & 41.1 & 33 & 285 & 73.6\% & 38.0\\
Cyclic FC sound& 24 & 46.1 & 43 & 118 & 34.3 & 26 & 93 & --- & 43.2 \\
Cyclic FC uns. & 155 & 73.2 & 61 & 274 & 51.1 & 44 & 243 & 78.1\% & 53.2 \\
Acyclic not FC & 542 & 47.0 & 38 & 262 & 46.8 & 37 & 267 & 68.4\% & 38.4\\
Cyclic not FC & 30 & 85.6 & 72 & 193 & 88.1 & 72 & 185 & 66.4\% & 82.7 \\
\end{tabular}
}%end of scalebox
\caption{Analyzed workflow nets}
\label{tab:experiments}
\end{table}

%\begin{table}
%\def\arraystretch{1.5}%
%\setlength\tabcolsep{7pt}
%\centering
%\begin{tabular}{ccc}
% & FC & not FC\\
%Acyclic & 45.7 & 47\\
%Cyclic & 69.5 & 85.6\\
%\end{tabular}
%\hspace{1cm}
%\begin{tabular}{ccc}
% & FC & not FC\\
%Acyclic & 9.4 (20.70\%) & 15.6 (33.30\%)\\
%Cyclic & 20.6 (29.60\%) & 31.8 (37.20\%)\\
%
%\end{tabular}
%\caption{Avg. places per net before and after reduction (percentage)}
%\label{tab:experiments2}
%\end{table}

We have conducted some experiments to obtain answers to the following two questions:
 (1) Since our rules must preserve not only soundness, 
but also the input/output relation, they cannot be as ``aggressive'' as previous ones. So 
it could be the case that they only lead to a small reduction factor
in the non-free choice case. To explore this question, we experimentally compute the reduction 
factor for non-free choice benchmarks.
 (2) While Theorem \ref{thm:runtimeCyclic} is a strong theoretical result (compared to PSPACE-hardness of soundness for arbitrary workflow nets), the $\mathcal{O}(|C|^4\cdot |T|)$ bound 
has rather high exponents, and could potentially lead to an impractical reduction algorithm. To
explore if the worst case appears in practice, we compute the number of rule 
applications for free choice benchmarks.

We have used the benchmark suites of \cite{van2007verification,fahland2009instantaneous}, both consisting of industrial examples.
We analyzed a total of 1958 nets, of which 1386 were free choice. 
Running the reduction procedure for all benchmarks took 6 seconds. 
The results are shown in Table \ref{tab:experiments}.
The number of places and 
transitions are always given as average/median/max. In the free choice case, our algorithm found 
that 470 nets were sound (i.e. those nets were reduced completely), and on average the nets were 
reduced to about 23\% of their original size. In the non-free choice case no net could be 
reduced completely (which does not necessarily mean they are all unsound). However, the size 
of the nets was still reduced to about 35\% of their original size. While we have omitted some more data on the number of rule applications due to lack of space, our experiments indicate that the number of rule applications is close to linear in the size of the net. 

\section{Conclusion}
\label{sec:conclusion}

We have presented the first set of reduction rules for 
colored workflow nets that preserves not only soundness, but also the 
input/output relation, and is complete for free choice nets. We have also designed a specific reduction algorithm. Experimental results for 1958 workflow nets derived 
from industrial business processes show that the nets are reduced to 
about 30\% of their original size.  

Our rules can be used to prove properties of the input/output relation
by computing it. To reduce the complexity of the computation, we observe that our reduction rules are easily compatible with abstract 
interpretation techniques: given an abstract domain of data values, the rules 
can be adapted so that, instead of computing the transformers of the 
new transitions using the union, join, and Kleene-star operators, they 
compute their abstract versions. We plan to study this combination in future research.

\paragraph{Acknowledgements} Thank you very much to Karsten Wolf for pointing us to the benchmarks. Many thanks to the anonymous reviewers for the helpful comments.

%\bibliography{ref}

\begin{thebibliography}{10}

\bibitem{DBLP:conf/apn/2004}
Jordi Cortadella and Wolfgang Reisig, editors.
\newblock {\em Applications and Theory of Petri Nets 2004, 25th International
  Conference, {ICATPN} 2004, Bologna, Italy, June 21-25, 2004, Proceedings},
  volume 3099 of {\em Lecture Notes in Computer Science}. Springer, 2004.

\bibitem{desel2005free}
J{\"o}rg Desel and Javier Esparza.
\newblock {\em Free choice Petri nets}, volume~40.
\newblock Cambridge university press, 2005.

\bibitem{negI}
Javier Esparza and J{\"o}rg Desel.
\newblock On negotiation as concurrency primitive.
\newblock In Pedro~R. D'Argenio and Hern{\'a}n~C. Melgratti, editors, {\em
  CONCUR}, volume 8052 of {\em Lecture Notes in Computer Science}, pages
  440--454. Springer, 2013.

\bibitem{negII}
Javier Esparza and J{\"o}rg Desel.
\newblock On negotiation as concurrency primitive {II}: Deterministic cyclic
  negotiations.
\newblock In Anca Muscholl, editor, {\em FoSSaCS}, volume 8412 of {\em Lecture
  Notes in Computer Science}, pages 258--273. Springer, 2014.

\bibitem{fahland2009instantaneous}
Dirk Fahland, C{\'e}dric Favre, Barbara Jobstmann, Jana Koehler, Niels Lohmann,
  Hagen V{\"o}lzer, and Karsten Wolf.
\newblock Instantaneous soundness checking of industrial business process
  models.
\newblock In {\em Business Process Management}, pages 278--293. Springer, 2009.

\bibitem{DBLP:journals/is/FavreFV15}
C{\'{e}}dric Favre, Dirk Fahland, and Hagen V{\"{o}}lzer.
\newblock The relationship between workflow graphs and free-choice workflow
  nets.
\newblock {\em Inf. Syst.}, 47:197--219, 2015.

\bibitem{Hopcroft:2006:IAT:1196416}
John~E. Hopcroft, Rajeev Motwani, and Jeffrey~D. Ullman.
\newblock {\em Introduction to Automata Theory, Languages, and Computation (3rd
  Edition)}.
\newblock Addison-Wesley Longman Publishing Co., Inc., Boston, MA, USA, 2006.

\bibitem{jensen2009coloured}
Kurt Jensen and Lars~M Kristensen.
\newblock {\em Coloured Petri nets: modelling and validation of concurrent
  systems}.
\newblock Springer Science \& Business Media, 2009.

\bibitem{DBLP:conf/icsoc/OuyangVABDH05}
Chun Ouyang, Eric Verbeek, Wil M.~P. van~der Aalst, Stephan Breutel, Marlon
  Dumas, and Arthur H.~M. ter Hofstede.
\newblock Wofbpel: {A} tool for automated analysis of {BPEL} processes.
\newblock In Boualem Benatallah, Fabio Casati, and Paolo Traverso, editors,
  {\em Service-Oriented Computing - {ICSOC} 2005, Third International
  Conference, Amsterdam, The Netherlands, December 12-15, 2005, Proceedings},
  volume 3826 of {\em Lecture Notes in Computer Science}, pages 484--489.
  Springer, 2005.

\bibitem{sadiq2004data}
Shazia Sadiq, Maria Orlowska, Wasim Sadiq, and Cameron Foulger.
\newblock Data flow and validation in workflow modelling.
\newblock In {\em Proceedings of the 15th Australasian database
  conference-Volume 27}, pages 207--214. Australian Computer Society, Inc.,
  2004.

\bibitem{sadiq2000analyzing}
Wasim Sadiq and Maria~E Orlowska.
\newblock Analyzing process models using graph reduction techniques.
\newblock {\em Information systems}, 25(2):117--134, 2000.

\bibitem{trvcka2009data}
Nikola Tr{\v{c}}ka, Wil~MP Van~der Aalst, and Natalia Sidorova.
\newblock Data-flow anti-patterns: Discovering data-flow errors in workflows.
\newblock In {\em Advanced Information Systems Engineering}, pages 425--439.
  Springer, 2009.

\bibitem{van2004workflow}
Wil Van Der~Aalst and Kees~Max Van~Hee.
\newblock {\em Workflow management: models, methods, and systems}.
\newblock MIT press, 2004.

\bibitem{DBLP:journals/jcsc/Aalst98}
Wil M.~P. van~der Aalst.
\newblock The application of petri nets to workflow management.
\newblock {\em Journal of Circuits, Systems, and Computers}, 8(1):21--66, 1998.

\bibitem{van2002alternative}
Wil~MP van~der Aalst, Alexander Hirnschall, and HMW Verbeek.
\newblock An alternative way to analyze workflow graphs.
\newblock In {\em Advanced Information Systems Engineering}, pages 535--552.
  Springer, 2002.

\bibitem{van2011soundness}
Wil~MP van~der Aalst, Kees~M van Hee, Arthur~HM ter Hofstede, Natalia Sidorova,
  HMW Verbeek, Marc Voorhoeve, and Moe~Thandar Wynn.
\newblock Soundness of workflow nets: classification, decidability, and
  analysis.
\newblock {\em Formal Aspects of Computing}, 23(3):333--363, 2011.

\bibitem{van2007verification}
Boudewijn~F van Dongen, Monique~H Jansen-Vullers, HMW Verbeek, and Wil~MP
  van~der Aalst.
\newblock Verification of the sap reference models using epc reduction,
  state-space analysis, and invariants.
\newblock {\em Computers in Industry}, 58(6):578--601, 2007.

\bibitem{van2005verification}
Boudewijn~F van Dongen, Wil~MP Van~der Aalst, and Henricus~MW Verbeek.
\newblock Verification of epcs: Using reduction rules and petri nets.
\newblock In {\em Advanced Information Systems Engineering}, pages 372--386.
  Springer, 2005.

\bibitem{DBLP:conf/apn/HeeSV04}
Kees~M. van Hee, Natalia Sidorova, and Marc Voorhoeve.
\newblock Generalised soundness of workflow nets is decidable.
\newblock In Cortadella and Reisig \cite{DBLP:conf/apn/2004}, pages 197--215.

\bibitem{DBLP:journals/cj/VerbeekBA01}
H.~M. W.~(Eric) Verbeek, Twan Basten, and Wil M.~P. van~der Aalst.
\newblock Diagnosing workflow processes using woflan.
\newblock {\em Comput. J.}, 44(4):246--279, 2001.

\bibitem{verbeek2010reduction}
HMW Verbeek, Moe~Thandar Wynn, Wil~MP van~der Aalst, and Arthur~HM ter
  Hofstede.
\newblock Reduction rules for reset/inhibitor nets.
\newblock {\em Journal of Computer and System Sciences}, 76(2):125--143, 2010.

\end{thebibliography}

\setlength{\textfloatsep}{\saveold}

\appendix
\section{Appendix}
\subsection{Proof of Section \ref{sec:rules}}

\begin{reftheorem}{thm:correctness}
The merge, shortcut and iteration rules are correct for CWF nets.
\end{reftheorem}

\begin{proof}
Correctness of the merge and iteration rules follows easily from the definitions.
We concentrate on the shortcut rule, whose proof is more delicate.  

Assume the transition $t$ and cluster $c$ are as in Definition \ref{def:shortcut}. 
We say that $c$ occurs in a firing sequence $\sigma$ if some transition $t\in c\cap T$ occurs in $\sigma$.

Let $\W_2$ be the result of applying the shortcut rule to $\W_1$. The proof is divided into four parts. 

\begin{itemize}
\item[(1)] For every initial firing sequence $\sigma_2$ of $\W_2$, there is an initial  firing
sequence $\sigma_1$ of $\W_1$ leading to the same marking. 

Let $\sigma_1$ be the result of replacing all occurrences of $t'_s$ (as in Definition \ref{def:shortcut}) in $\sigma_2$ by the sequence $t \, t'$. Clearly this yields an initial firing sequence $\sigma_1$ of $\W_1$. 
The marking reached by these two sequences is the same.

\item[(2)] For each initial firing sequence $\sigma_1$ of $\W_1$, there is an initial  firing
sequence $\sigma_2$ of $\W_1$ leading to the same marking. 

Since $t$ unconditionally enables $c$, between any two occurrences of $t$ in $\sigma_1$ 
there is an occurrence of some $t' \in c\cap T$. Define $\sigma_2$ as the result of, for each occurrence
of $t$, removing the next occurrence of $t'\in c\cap T$, and replacing $t$ by an occurrence of $t'_s$.
Should the last occurrence of $t$ not be followed by an occurrence of $c$, 
we instead start with $\sigma_1t'$ for some $t'$. 
This yields a firing sequence $\sigma_2$ in $\W_2$, 
the corresponding sequence to $\sigma_1$, 
which leads to the same marking as $\sigma_1$ (or $\sigma_1t'$).

\item[(3)] If $\W_1$ is sound then $\W_2$ is sound. 

We first show that every transition in $\W_2$ can be enabled by some initial firing sequence.
Since every transition in $\W_1$ can be enabled by some initial firing sequence, using the corresponding sequence in $\W_2$, which exists by (2), we are done for all transitions but those in $c\cap T$.
If $c$ still exists in $\W_2$, there must be some $t''\in T$ such that such that $t''\neq t$ and $p \in t''^\bullet$ for some $p \in c\cap P$. Since $t''\neq t$, the transition $t''$ is unchanged in $\W_2$. By soundness of $W_1$, 
some initial firing sequence $\sigma_1$ enables $t''$. We extend it by an occurrence of $t''$ and then to a firing sequence $\sigma_1t''\rho_1$ leading to the final marking in $\W_1$, which is possible since $\W_1$ is sound. This sequence  contains an occurrence of $c$ which is not matched by a prior occurrence of $t$, and so does the corresponding sequence in $\W_2$. Together with the fact that $c$ is a free choice cluster, it follows that all transitions in $c\cap T$ can be enabled in $\W_2$.

We now prove that every initial firing sequence $\sigma_2$ in $\W_2$ can be extended to a sequence that ends with the final marking. Take the corresponding occurrence sequence $\sigma_1$ in $\W_1$, and extend it to a sequence $\tau_1 = \sigma_1\rho_1$ that ends with the final marking in $\W_1$ (possible by soundness of $\W_1$). The corresponding sequence in $\W_2$ is $\tau_2=\sigma_2\rho_2$, which is the extension of $\sigma_2$ (by construction of corresponding sequences) that ends with the final marking.

\item[(4)] If $\W_2$ is sound then $\W_1$ is sound. 

Since every transition in $\W_2$ can be enabled by some initial occurrence sequence, using the corresponding sequence in $\W_1$ we see that the same is true for all transitions but those in $c\cap T$. However, it is then easy to show that those transitions can be enabled in $\W_1$: take the initial firing sequence that enables $t$ and extend it by $t$ which unconditionally enables $c$.

For an initial occurrence sequence $\sigma_1$ in $\W_1$, the corresponding occurrence sequence $\sigma_2$ in $\W_2$ can be extended to a sequence $\tau_2 = \sigma_2\rho_2$ that ends with the final marking in $\W_2$. The corresponding sequence $\tau_1$ in $\W_1$ is either $\sigma_1\rho_1$ or $\sigma_1t'\rho_1$ for some $t' \in c\cap T$, an extension of $\sigma_1$ that ends with the final marking.

\end{itemize}
\end{proof}

%%%%% HERE IS THE ACTUAL PROOF %%%%%
\subsection{Proof of Section \ref{sec:procedure}}

We recall some important results from \cite{desel2005free} about free choice nets and S-components. 
Recall that a net is like a workflow net, but without a distinguished input and output place (see e.g. \cite{desel2005free}). A Petri net is a pair $(N, M_0)$, where $N$ is a net and $M_0$ is a  marking of $N$ called the initial marking.

\begin{definition}[S-component]
Let $N=(P,T,F)$ be a net. An S-component of $N$ is a net $N'=(P',T',F')$ such that
\begin{itemize}
\item $\emptyset \neq P' \subseteq P$
\item ${}^\bullet s \cup s^\bullet \subseteq T'$ for every $s \in P'$
\item $N'$ is a strongly connected S-net (every transition has exactly one pre- and post place in $N'$)
\end{itemize}
\end{definition}

We recall the following theorem: 

\begin{theorem}[S-component coverability]\cite{desel2005free}
A live and bounded free choice Petri net $N$ can be covered by S-components, i.e. there are S-components 
of $N$ such that every place belongs to some S-component.
\end{theorem}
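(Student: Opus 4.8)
The plan is to prove the equivalent statement that every place of $N$ lies on some S-component; the union of one such S-component per place is then the desired cover. Since $N$ carries a live and bounded marking it is by definition well-formed, so throughout I may invoke the structure theory of well-formed free choice nets, and in particular the Rank Theorem.

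First I would show that $N$ is covered by positive place invariants, i.e.\ nonnegative row vectors $x \neq 0$ with $x\,\mathbf{C} = 0$ (where $\mathbf{C}$ is the incidence matrix) whose supports together meet every place. The existence of positive place invariants follows from boundedness together with the strong connectedness of $N$; that they can be chosen to cover \emph{every} place follows from the Rank Theorem, which in the well-formed free choice case pins down the dimension of the invariant space as $|P|$ minus the rank of $\mathbf{C}$ and forbids any place from being annihilated by all invariants, since such a place could otherwise be starved or made to grow without bound, which would be incompatible with $N$ being simultaneously live and bounded.

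The technical heart, and the step I expect to be hardest, is to show that the support $\langle x\rangle$ of every \emph{minimal} positive place invariant $x$ induces an S-component. The free choice condition is exactly what makes this work. Using the invariant equation $\sum_{s\in{}^\bullet t} x(s) = \sum_{s\in t^\bullet} x(s)$ for each transition $t$, together with minimality of $x$ and the uniform conflict structure of clusters, I would argue that every transition meeting $\langle x\rangle$ has exactly one input and exactly one output place inside $\langle x\rangle$, so that the induced subnet is an S-net; strong connectedness of this subnet would then follow from minimality, since a proper strongly connected sub-part would itself support a strictly smaller positive invariant, contradicting minimality of $x$. This identifies the supports of minimal invariants with S-components.

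Finally I would decompose each of the covering invariants into minimal positive ones — the positive place invariants form a finitely generated cone — and note that the minimal invariants so obtained still cover every place. Each of them contributes an S-component through its support, and the union of these S-components covers $P$, which is precisely the assertion. The only delicate points are the two structural claims about minimal supports, namely the S-net incidence pattern and strong connectedness, both of which rest squarely on free choiceness and on the well-formedness inherited from liveness and boundedness.
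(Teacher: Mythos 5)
The paper offers no proof of this statement: it is imported verbatim from \cite{desel2005free} as a known result (the classical S-coverability theorem for well-formed free choice nets), so there is no in-paper argument to compare yours against. Judged on its own, your sketch follows a recognizable and legitimate route --- cover the net by semi-positive place invariants, show that supports of minimal ones generate S-components --- but the two claims you yourself flag as ``the technical heart'' are exactly where all of the difficulty of the theorem lives, and your sketch does not actually discharge them. Most concretely, the strong-connectedness step as you state it fails: if the subnet induced by $\langle x\rangle$ is an S-net but not strongly connected, a ``proper strongly connected sub-part'' $S_0$ does \emph{not} in general support a place invariant of $N$, because a transition may have its unique input place in $S_0$ and its unique output place in $\langle x\rangle\setminus S_0$ (an edge leaving the component), so the characteristic vector of $S_0$ is not annihilated by the incidence matrix and no contradiction with minimality of $x$ is obtained. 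The correct argument needs more global structure (in \cite{desel2005free} it is routed through minimal siphons, allocations, and the Duality Theorem), and the S-net incidence pattern of minimal supports likewise requires a genuine argument using free choiceness, not just the balance equation $\sum_{s\in{}^\bullet t}x(s)=\sum_{s\in t^\bullet}x(s)$.

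There is also a foundational worry with your first step. Boundedness of a live marking gives structural boundedness, hence a semi-positive $y$ with $y\,\mathbf{C}\le 0$, but turning this into a \emph{positive place invariant covering every place} is itself a nontrivial theorem about well-formed free choice nets; and the Rank Theorem, which you invoke to guarantee that no place is missed, does not assert that, and in the standard development of \cite{desel2005free} is proved \emph{after} and \emph{using} the coverability theorems, so leaning on it here risks circularity. In short: the architecture of your proof is sound and matches one known treatment, but as written it assumes the two lemmas that constitute the theorem's actual content, and the one justification you do offer (for strong connectedness) does not go through.
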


This has implications for sound free choice workflow nets $\W$: 

\begin{definition}
For a workflow net $\W$, let $\oline{\W}$ be the result of adding to 
$\W$ a transition $t^*$ with $o$ as input and $i$ as output place. 
We call $\oline{\W}$ the extended net of $\W$.
\end{definition}

Since $W$ is sound, the Petri net $(\oline{\W}, i)$ (i.e., the net $\oline{\W}$ with
initial marking $i$) is live and bounded \cite{DBLP:journals/jcsc/Aalst98} 
and thus can be covered by S-components.

\begin{reftheorem}{thm:acycliccomplete}
The merge and d-shortcut rule are complete for acyclic FC-CWF nets.
\end{reftheorem}

We split the proof into three lemmas from which the result follows. 
We call a net {\em irreducible} if none of our rules is applicable. 

\begin{lemma}
\label{lemma1}
Let $\W$ be an sound FC-CWF net that is irreducible and 
let $p \in P$ be a place of $\W$ with $|p^\bullet| > 1$.
Then every S-component of $\oline{\W}$ contains an element in $[p]\cap P$. 
\end{lemma}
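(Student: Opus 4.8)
The plan is to prove the statement by first reducing it to a claim about transitions, then showing every S-component necessarily contains $i$ and $o$, and finally deriving a contradiction with irreducibility if some S-component were to bypass the cluster $[p]$.

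First I would record the relevant consequence of the free choice property: if a place $q$ shares an output transition with $p$, then $q^\bullet = p^\bullet$, so the cluster of $p$ is the complete bipartite graph between $[p]\cap P$ (exactly the places whose postset equals $p^\bullet$) and $[p]\cap T = p^\bullet$; in particular all transitions of $p^\bullet$ have the common preset $[p]\cap P$. Using the defining properties of an S-component $N'=(P',T',F')$ --- namely ${}^\bullet s \cup s^\bullet \subseteq T'$ for each $s\in P'$, and each transition having a unique input place in $P'$ --- I would show that $P'$ meets $[p]\cap P$ if and only if $T'$ meets $p^\bullet$ (and in fact $T'\cap p^\bullet\neq\emptyset$ forces $p^\bullet\subseteq T'$): the $P'$-input place of any $t_j\in p^\bullet\cap T'$ must lie in $[p]\cap P$, and conversely any $s\in P'\cap[p]$ pulls $s^\bullet=p^\bullet$ into $T'$. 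Thus it suffices to prove that every S-component contains a transition of $p^\bullet$.

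Next I would observe that every S-component contains $i$ (and hence $o$ and $t^*$). This is the standard S-invariant argument: the characteristic vector of $P'$ is a place invariant, so the token count on $P'$ is constant along every firing sequence; since the marking $i$ puts its only token on $i$, a component with $i\notin P'$ would carry zero tokens forever, so none of its transitions could ever become enabled, contradicting liveness of $(\oline{\W},i)$, which holds by soundness. Strong connectivity of $N'$ together with ${}^\bullet i=\{t^*\}$ and ${}^\bullet t^*=\{o\}$ then forces $t^*,o$ into $N'$.

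The main step, and the one I expect to be the real obstacle, is to rule out a ``bypassing'' S-component. Suppose for contradiction some S-component $N'$ satisfies $P'\cap[p]=\emptyset$. Since $|p^\bullet|\geq 2$, by S-coverability $p$ lies on a second S-component $N_0$ whose single token genuinely visits $p$, whereas the single token of $N'$ travels from $i$ to $o$ without ever entering $[p]\cap P$. I would analyse where the token paths of $N'$ and $N_0$ fork and re-join: because an S-net transition has a single output place, a divergence must occur at a forking transition $w$ that simultaneously feeds the $[p]$-branch and a place of $P'$, and the branches must re-synchronise later at a joining transition. Focusing on the $N'$-branch that bypasses $[p]$, I would argue that its first transition $v$ has a singleton cluster with preset contained in $w^\bullet$, so that the d-shortcut rule applies to the pair $(w,[v])$ with $[v]\notin\{[o],[w]\}$ --- contradicting irreducibility (a genuinely parallel pair of such branches would instead expose the merge rule). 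Verifying the guard of the applicable rule --- that the bypassing transition's cluster is indeed a singleton and that all of its input places are produced by the one forking transition --- is the delicate point, and is where I expect to lean on soundness (to control how branches split and are forced to re-merge) and on acyclicity, the setting of Theorem~\ref{thm:acycliccomplete}, under which each S-component degenerates into a simple $i$-to-$o$ path and the bypassing branch is an honest path rather than a tangled subnet. Once such a rule application is exhibited, irreducibility is contradicted, no bypassing S-component can exist, and the proof is complete.
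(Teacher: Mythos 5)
There is a genuine gap, and it sits exactly where you flagged ``the delicate point'': you never establish that any rule is applicable at the fork, and the claim you would need there is false in general. Your plan is to locate a forking transition $w$ that feeds both the $[p]$-branch and the bypassing component $N'$, and to apply the d-shortcut rule to the pair $(w,[v])$, where $v$ is the first transition of the bypassing branch. But the d-shortcut rule requires both that $[v]\cap P\subseteq w^\bullet$ and that $|[v]\cap T|=1$, and neither is forced: the first place of the bypassing branch can itself be a choice place (so $[v]$ contains several transitions and the extra d-shortcut guard fails), and its cluster can contain further places fed by transitions other than $w$ (so unconditional enabling fails); the merge rule does not step in either, since the transitions of $[v]$ need not have equal pre- and postsets. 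Your supporting picture --- that by acyclicity ``each S-component degenerates into a simple $i$-to-$o$ path'' --- is also incorrect: an S-component must contain \emph{all} output transitions of each of its places, so it branches at every choice place it contains (in particular at $p$ itself), and the bypassing branch need not be an honest path at all.

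The paper avoids arguing at the fork altogether. Its key step is to show that some transition $t\in[p]\cap T$ \emph{unconditionally enables} an entire cluster, i.e.\ $[q]\cap P\subseteq t^\bullet$ for some place $q$; this is proved by contradiction using acyclicity and soundness: one takes two transitions $t_1,t_2\in p^\bullet$ with distinct postsets (available because the merge rule is not applicable), fires a longest common continuation $\sigma$ after each, shows the two resulting sets of marked clusters are disjoint, and then uses S-components and soundness to build paths back and forth between those sets, producing a cycle. Irreducibility then forces the unconditionally enabled cluster to be either $[o]$ or again a cluster with a choice place, so iterating yields a chain of unconditional enablings from $[p]$ down to $[o]$; since every S-component contains $o$, membership propagates backwards along this chain into $[p]$. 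This forward-chain-plus-backward-propagation argument is the content your proposal is missing, and without it (or a substitute for it) the lemma is not proved.
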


\begin{proof}
We proceed in two steps.\\[0.2cm]
\noindent (a) There is a transition t in $p^\bullet$ such that: either $t^\bullet = o$ or $[q]\cap P \subseteq t^\bullet$ for some $q \in P$ with $|q^\bullet| > 1$.

This is the core of the proof. We first claim: if $[q]\cap P \subseteq t^\bullet$ for some $t \in [p]\cap T, q \in P$, then (a) holds. 
Indeed: if $[q]\cap P \subseteq t^\bullet$ for some $t \in [p]\cap T, q \in P$, then either $q = o$ or $|q^\bullet| > 1$, because otherwise the d-shortcut rule can be applied to 
$p, t$ and $q$, contradicting the irreducibility of $\W$. This proves the claim. 

It remains to prove that $[q]\cap P \subseteq t^\bullet$ for some $t \in [p]\cap T, q \in P$. For this,
we assume the contrary, and prove that $\W$ contains a cycle, contradicting the 
hypothesis.

Since the merge rule is not applicable to $\W$, 
$p^\bullet$ contains two transitions $t_1, t_2$ such that $t_1^\bullet \neq t_2^\bullet$. We proceed in three steps.

\begin{itemize}
\item[(a1)] For every reachable marking $M$ that marks $[p]\cap P$ there is a
sequence $\sigma$ such that $M \by{t_1 \sigma} M_1$ and $M \by{t_2 \sigma} M_2$ for some markings $M_1, M_2$, and the sets $A_1$ and $A_2$ of clusters marked by $M_1, M_2$ are disjoint.

Let $\sigma$ be a longest occurrence sequence such that $M \by{t_1 \sigma} M_1$ 
and $M \by{t_2 \sigma} M_2$ for some markings $M_1, M_2$ (notice that
$\sigma$ exists, because all occurrence sequences of $\W$ are finite by acyclicity).
We have $A_1 \cap A_2 = \emptyset$, because otherwise we can extend $\sigma$ by firing any transition of any cluster marked by both markings.
%We prove that, furthermore, $A_1 \neq \{o\} \neq A_2$. Assume w.l.o.g. $A_1 = \{o\}$.
%Then, since $\W$ is sound, we have $M_1 = M_f$, which means that the last 
% step of $\sigma$ is of the form $(n_f, r_f)$. So $M_2$ is also a marking obtained after 
% the occurrence of $(n_f,r_f)$. Since every agent participates in $n_f$ and $\trans(n_f,a,r_f)=\emptyset$ for every
% agent $a$ and transition $r_f$, we also have $M_2 = M_f$. So
% $M \by{(n,r_1)} M_1' \by{\sigma} M_f$ and $M \by{(n,r_1)} M_2' \by{\sigma} M_f$, which implies $M_1' = M_2'$. Since $\N$ is deterministic, we then have $\trans(n,a,r_1) = \trans(n,a,r_2)$, contradicting the hypothesis.

\item[(a2)] For every $a_1 \in A_1$ there is a path leading from some $a_2 \in A_2$ to $a_1$, and
for every $a_2 \in A_2$ there is a path leading from some $a_1 \in A_1$ to $a_2$.

By symmetry it suffices to prove the first part. Since $A_1$ and $A_2$ are disjoint, 
$a_1$ is marked by $M_1$ but not by $M_2$. Thus there is a place $q$ in $a_1$ that is not marked by $M_2$. 
%not true if multiple tokens
%Moreover, since $\W$ is acyclic, every cluster can fire at most once in an occurrence sequence, and so neither a transition of $a_1$ nor one from $a_2$ appear in $\sigma$.
Since, the sequences $t_1 \sigma$ and $t_2 \sigma$ only differ in their first element, 
it must hold that $q\in t_1^\bullet$ and $q\not\in t_2^\bullet$. Let $S$ be an S-component of $\oline{\W}$ that contains $q$. Then $S$ contains $^\bullet q$ and thus $t_1$. It also contains some place in $a_{1P} = {}^\bullet t_1$ and therefore also each transition in $a_{1T}$ and in particular $t_2$. In $S$, let $q'$ be the target of $t_2$. It is clear that $q' \neq q$ and even $q' \notin a_{1p}$ as an S-component can only contain one place per cluster.

Comparing the markings after $t_1$ fired and after $t_2$ fired, the token of $S$ is in $q$ in the first case and in $q'$ in the second case. This token will in both cases remain there during the sequence $\sigma$, thus $M_2$ marks $q'$ while $M_1$ marks $q$.
%TODO: Adjust Figure
(see Figure \ref{fig:polyproof}).

\begin{figure}[h]
%\centerline{\scalebox{0.45}{\input{polyproof.pdf_t}}}
\centering
\begin{tikzpicture}[decoration=snake]
%places of \oline{p}
\node[circle,draw] (p_P1) at (0,0) {};
\node[circle,draw] (p_P2) at (1,0) {};
\node[circle,draw] (p_P3) at (2,0) {};
\node[left = -0.1cm of p_P3] {$p$};
%transitions t_1,t_2
\node[rectangle,draw] (p_T1) at (0.5,-1) {};
\node[left = 0cm of p_T1] {$t_1$};
\node[rectangle,draw] (p_T2) at (1.5,-1) {};
\node[right = 0cm of p_T2] {$t_2$};
%places of a_2
\node[circle,draw] (a2_P1) at (2.5,-2) {};
\node[circle,draw] (a2_P2) at (3.5,-2) {};
\node[circle,minimum size=1.2mm,inner sep=0,draw,fill=white] at (a2_P1.center) {};
%places of a_1
\node[circle,draw] (a1_P1) at (-0.5,-4) {};
\node[circle,draw] (a1_P2) at (0.5,-4) {};
\node[below left = -0.15cm and 0cm of a1_P2] {$q$};
%tokens
\node[circle,minimum size=1.2mm,inner sep=0,draw,fill=white] at ($(a1_P1.center)+(0.06,0.06)$) {};
\node[circle,minimum size=1.2mm,inner sep=0,draw,fill=black] at ($(a1_P1.center)+(-0.06,-0.06)$) {};
\node[circle,minimum size=1.2mm,inner sep=0,draw,fill=black] at (a1_P2.center) {};
%places of \oline{q'}
\node[circle,draw] (q_P1) at (1.5,-4) {};
\node[circle,draw] (q_P2) at (2.5,-4) {};
\node[below right = -0.3cm and 0cm of q_P1] {$q'$};
\node[circle,minimum size=1.2mm,inner sep=0,draw,fill=white] at (q_P1.center) {};
%boxes
\node[rectangle,draw,fit={(p_P1) (p_P3) (p_T1)}] {};
\node[rectangle,draw,fit={($(a1_P1.north west)+(0pt,5pt)$) ($(a1_P2.south east)+(0pt,-15pt)$)}] (boxa1){};
\node[left=0cm of boxa1]{$a_1$};
\node[rectangle,draw,fit={($(a2_P1.north west)+(0pt,5pt)$) ($(a2_P2.south east)+(0pt,-15pt)$)}](boxa2){};
\node[right=0cm of boxa2]{$a_2$};
\node[rectangle,draw,fit={($(q_P1.north west)+(0pt,5pt)$) ($(q_P2.south east)+(0pt,-15pt)$)}] {};
%explanation
\node[circle,minimum size=1.2mm,inner sep=0,draw,fill=black] (tokenBlack) at (5,-3) {};
\node[circle,minimum size=1.2mm,inner sep=0,draw,fill=white] (tokenWhite) at (5,-3.5) {};
\node[right=0cm of tokenBlack] {Marking $M_1$};
\node[right=0cm of tokenWhite] {Marking $M_2$};
\pgfsetarrowsend{latex}
\draw (p_P1) -- (p_T1);
\draw (p_P2) -- (p_T1);
\draw (p_P3) -- (p_T1);
\draw (p_P1) -- (p_T2);
\draw (p_P2) -- (p_T2);
\draw (p_P3) -- (p_T2);
\draw (-1,-3) -- (a1_P1);
\draw (p_T1) -- (a1_P2);
\draw (p_T2) -- (q_P1);
\draw (p_T2) -- (a2_P1);
\draw (4,-1) -- (a2_P2);
%squiggly
\draw[decorate] (a2_P1) -- (q_P2);
%\draw[decorate] (q_P1.south) arc (180:0:-0.5);
\draw[decorate,bend left=73] (q_P1) to (a1_P2);
\end{tikzpicture}
\caption{Illustration of the proof of Lemma \ref{lemma1}.}
\label{fig:polyproof}
\end{figure}
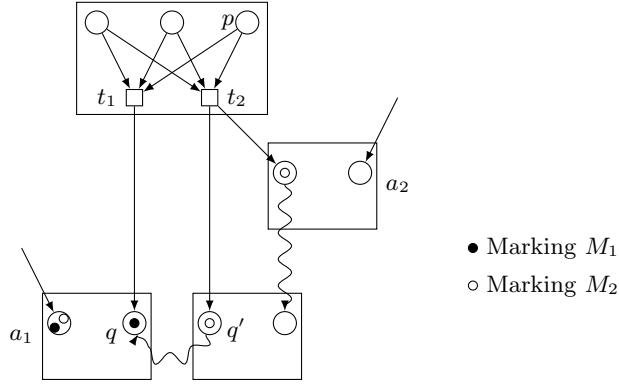

We first show that there is a path from $[q']$ to $[q]$.
By assumption, there is no transition of $[p]\cap T$ such that $[q]\cap P \subseteq t^\bullet$ for some $q \in P$, and so 
$[q]\cap P \not\subseteq t_1^\bullet$.
Thus $[q]\cap P$ contains a place $r \neq q$ such that $r\not\in t_1^\bullet$, and since $M_1$ marks $[q]$ (and therefore $r$), it holds that either some transition in $\sigma$ marked $r$ or $r$ was already marked by $M$. Therefore $M_2$ must also mark $r$.

Since $M_2$ marks $r$, and $\W$ is sound, there is a sequence of transitions
$\tau$ such that $M_2 \by{\tau} M_2'$ and $M_2'$ marks $[r]$. Since in $M_2$, the token in the S-component of $q$ is on $q'$, there is a path from $q'$ to $q$ and thus from $[q']$ to $[q]$.

We now prove that there is a path from some $a_2 \in A_2$ to $[q']$. If $[q']$ is marked by $M_2$, then $[q'] \in A_2$ and we are done. If $[q']$ is not enabled at $M_2$ (as in the figure) then, since $M_2$ marks $q'$ and $\W$ is sound,
there is a sequence of transitions $\tau$ such 
that $M_2 \by{\tau} M_2'$ and $M_2'$ enables $[q']$. 
Since $A_2$ is the set of clusters enabled at $M_2$, we have $\tau = t \, \tau'$ for some transition $t$ with $[t]\in A_2$. So some subword of $\tau$ is a path from some cluster of $a_2$ to $[q']$.

\item[(a3)] ${\cal N}$ contains a cycle. 

Follows immediately from (a2) and the finiteness of $N_1$ and $N_2$.
\end{itemize}

\noindent (b) Every every S-component of $\oline{\W}$ contains an element in $[p]\cap P$. \\
By repeated application of (a) we find a chain of clusters and transitions $(a_1, t_1) \ldots (a_k, t_k)$ such that
$a_1 = [p]$, $a_k = o$, $t_i\in{a_i}\cap T$ and ${a_{i+1}}\cap T \subseteq t_i^\bullet)$ for every $1 \leq i \leq k-1$. Since every S-component contains $o$, every S-component must contain $t_{n-1}$ and also an element of $a_{n-1}$, and also an element of $a_{n-2},\ldots,a_1$.
\end{proof}

\begin{lemma} 
\label{lemma2} 
Let $\W$ be a sound acyclic FC WF net that is irreducible. Every S-component contains a place of every cluster,
and for every cluster $a \neq [o]$ and every transition $t$ of $a\cap T$ there is a cluster $a'$ satisfying 
$t^\bullet =a'\cap P$. 
\end{lemma}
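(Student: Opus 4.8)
The plan is to prove the two assertions in the order opposite to the one in which they are stated: first the structural claim about postsets, and then derive the covering claim from it. Throughout, ``irreducible'' refers to the merge and d-shortcut rules, since the iteration rule is vacuous on an acyclic net (it would require a transition with $t^\bullet={}^\bullet t$). I will use two standing facts that follow from S-component coverability of $(\oline{\W},i)$: the net is $1$-safe (each place lies in an S-component carrying a single token), and every S-component contains both $i$ and $o$ (an S-component avoiding $i$ would be token-free and hence dead, contradicting liveness, and once it contains $i$ its only entry transition is $t^*$, whose unique pre-place $o$ must then also lie in it). I also record two elementary facts: an S-component contains at most one place of each cluster, and $q\in t^\bullet$ forces $[q]\neq[t]$ by acyclicity.

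First I would prove that for every transition $t$ there is a cluster $a'$ with $t^\bullet=a'\cap P$. Fix a reachable marking $M$ enabling $t$ (soundness) and fire it, $M\by{t}M'$; then $M'$ marks exactly the places of $t^\bullet$. If $o\in t^\bullet$, then $t^\bullet=\{o\}$: any other $q'\in t^\bullet$ would lie in an S-component that also contains $o$, so $M'$ would mark two of its places, contradicting the single-token invariant; hence $a'=[o]$. Assume now $o\notin t^\bullet$ and pick $q\in t^\bullet$. I must show (i) every $q'\in t^\bullet$ satisfies $[q']=[q]$, and (ii) $[q]\cap P\subseteq t^\bullet$. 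For (ii), observe that if $[q]$ were a single-transition cluster with $t^\bullet=[q]\cap P$, the d-shortcut rule would apply to $t$ and $[q]$ (its guards $|[q]\cap T|=1$, $[q]\neq[o]$, $[q]\neq[t]$, $[q]\cap P\subseteq t^\bullet$ all hold), contradicting irreducibility; so the relevant clusters here are the multi-transition ones, whose places have out-degree $>1$, and any missing place of $[q]$ would again be detected by a reduction step. The delicate part is (i): two output places of $t$ in distinct clusters form a genuine fork, which is consistent with acyclicity, so irreducibility must be brought in. Here I would adapt the cycle construction in the proof of Lemma \ref{lemma1} (steps (a1)--(a3)): starting from the two branches issuing from $q$ and $q'$, fire a maximal common continuation to reach markings whose enabled clusters are disjoint, then use soundness together with the S-component through each branch place to produce, for every marked cluster of one branch, a path coming from a marked cluster of the other; finiteness of these paths in the acyclic net yields a cycle, a contradiction. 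The base cases, in which an offending cluster has a single transition, are dispatched as in (ii) by exhibiting an applicable merge or d-shortcut. This establishes the second claim, with $a'=[q]$.

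Next I would derive the covering claim from the structural one. Using the structural claim, the only single-transition clusters are $[o]$ and $[i]$: if $c$ were a single-transition cluster with $c\neq[i]$, it would contain a place $r\neq i$ (which exists since $c\neq[i]$), and $r$ has an input transition $t'$; by the structural claim $t'^\bullet=[r]\cap P=c\cap P$, so $t'$ unconditionally enables the single-transition cluster $c\neq[o],[t']$ and the d-shortcut rule would apply, contradicting irreducibility. Hence every place of $c$ is input-free, i.e.\ equal to $i$, so $c=[i]$. Therefore every cluster is $[o]$, $[i]$, or a multi-transition (choice) cluster. Every S-component contains $i$ and $o$ by the standing fact, and for a multi-transition cluster $c$ any place $p\in c\cap P$ has $|p^\bullet|=|c\cap T|>1$, so Lemma \ref{lemma1} yields that every S-component contains a place of $[p]\cap P=c\cap P$. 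This covers all clusters and completes the first claim.

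The main obstacle is step (i) of the structural claim, namely ruling out forks whose two branches land in different clusters: acyclicity alone does not forbid them, so the argument must couple irreducibility with a Lemma \ref{lemma1}-style cycle construction that tracks the two branches through their S-components and invokes soundness to force a returning path. A second, more bookkeeping point is to keep this argument free of the covering claim: by proving the structural claim first, using only S-coverability, liveness, $1$-safeness, and the reduction rules, I avoid any circular dependence between the two assertions.
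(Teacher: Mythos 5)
Your overall architecture inverts the paper's: the paper proves the covering claim first (from Lemma \ref{lemma1} plus the observation that irreducibility forces every cluster other than $[i]$ and $[o]$ to have more than one transition) and then derives the structural claim $t^\bullet=a'\cap P$ from it, whereas you try to prove the structural claim first and derive covering afterwards. Your second half (covering from structural) is fine, but the first half has a genuine gap at exactly the point you flag as delicate, namely step (i). The cycle construction of Lemma \ref{lemma1} is driven by a \emph{conflict}: a place with $|p^\bullet|>1$ gives two alternative transitions $t_1,t_2$ and hence two distinct markings $M_1,M_2$ reachable from the same marking by $t_1\sigma$ and $t_2\sigma$, whose disjoint sets of enabled clusters must be connected by paths in both directions, yielding a cycle. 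Two output places $q,q'$ of a \emph{single} transition $t$ are concurrent, not alternative: firing $t$ produces one successor marking in which both are marked, there is no pair of markings to compare, and the ``disjoint enabled clusters'' setup never arises. Concurrency in an acyclic sound FC net creates no cycle (a plain fork--join net is a counterexample to any such implication), so acyclicity cannot be contradicted this way. What actually rules out such forks in an irreducible net is the covering claim together with the one-token-per-S-component invariant: a reachable marking enabling $t$ marks exactly the places of $[t]$, and after firing $t$ the S-component through $q$ witnesses that no cluster other than $[q]$ is fully marked while the S-component through $q'$ witnesses the same for $[q']$, so no cluster is marked at all and the net deadlocks. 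This argument is unavailable to you because you have deliberately postponed the covering claim.

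Your step (ii) has the same problem in miniature: if $[q]\cap P\not\subseteq t^\bullet$, then $t$ does not unconditionally enable $[q]$, so the d-shortcut guard simply fails and no rule ``detects'' the missing place; irreducibility gives you nothing directly. The paper again closes this via the deadlock argument (only places of $t^\bullet$ are marked after firing $t$, and a proper subset of $a'\cap P$ marks no cluster), which once more presupposes covering. So the dependency you were trying to avoid is not optional: you need the covering claim, obtained as in the paper by showing that a single-transition cluster reachable as the target of some $t_{k-1}$ along an occurrence sequence of multi-transition clusters would trigger the d-shortcut rule, before either part of the structural claim can be established.
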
 

\begin{proof}
We first show that every S-component contains a place of every cluster. By Lemma \ref{lemma1},
it suffices to prove that every cluster $a \neq [o]$ contains more than one transition.
Assume the contrary, i.e., some cluster different from $[o]$ contains only one transition.
Since, by soundness, every transition can occur, there is an occurrence sequence
$(a_0, t_0) (a_1, t_1) \cdots (a_k, t_k)$ such that $a_k$ contains only one transition
and all of $a_0, \ldots, a_{k-1}$ contain more than one transition. By Lemma \ref{lemma1}, 
every S-component contains a place in all of $a_0, a_1, a_{k-1}$. It follows that ${a_{i+1}}\cap T \subseteq t_i^\bullet$ 
$0 \leq i \leq k-1$. 
In particular, ${a_{k}}\cap T \subseteq t_{k-1}^\bullet$. But 
then, since $a_k$ only has one transition, the d-shortcut rule can be applied to 
$a_{k-1}, t_{k-1}, a_k$, contradicting the hypothesis that $\W$ is irreducible.

For the second part, assume there is a cluster $a \neq [o]$, a transition $t$ of $a$, and two clusters $a_1,a_2$ such that ${a_1}\cap P \cap t^\bullet \neq \emptyset \neq {a_2}\cap P \cap t^\bullet$. Let $p_1$ be some place in ${a_1}\cap P \cap t^\bullet$ and $p_2$ some place in ${a_2}\cap P \cap t^\bullet$
By the first part,
every S-component contains a place in $a$, $a_1$ and $a_2$. Since $\W$ is sound, some reachable
marking $M$ marks $a$. Moreover, since all S-components contain a place in $a$, and every S-component contains exactly one token, the marking $M$ marks exactly $a$. Let $M'$ be the marking given by
$M \by{t} M'$. Since the S-component of $p_1$ contains a place in every cluster, no cluster different from $a_1$ can be marked
at $M'$. Symmetrically, no cluster different from $a_2$ can be enabled
at $M'$. So $M'$ does not mark any cluster, contradicting that $\W$ is sound.

Therefore we have that $t^\bullet \subseteq a'\cap P$ for every cluster $a \neq [o]$ and every transition $t$ of $a\cap T$. To show equality, again assume the contrary. Then with the same reasoning as above, after an occurrence sequence that ends with $a$, only places in $a\cap T$ are marked but $a\cap T$ is not marked. Thus the marking does not mark any cluster, again contradicting soundness.
\end{proof}

\begin{theorem}
\label{irredtheo}
Let $\W$ be an irreducible sound acyclic FC WF net. Then $\W$ contains only two clusters $[i],[o]$.
\end{theorem}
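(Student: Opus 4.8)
The plan is to move from $\W$ to its \emph{cluster graph} and argue there by contradiction. First I would exploit the second statement of Lemma~\ref{lemma2}: for every cluster $a \neq [o]$ and every transition $t \in a \cap T$ there is a unique cluster, call it $\mathrm{succ}(a,t)$, with $t^\bullet = \mathrm{succ}(a,t) \cap P$. This lets me define a directed graph $G$ whose nodes are the clusters of $\W$ and which has an edge $a \to b$ whenever $b = \mathrm{succ}(a,t)$ for some $t \in a \cap T$. A cycle in $G$ would immediately yield a cycle in the net, so acyclicity of $\W$ makes $G$ acyclic. Two structural facts I would record next: $[o]$ is the \emph{unique} sink of $G$ (a cluster $\neq[o]$ with no outgoing edge would have no transition, forcing all its places to have empty postset, hence it would be $[o]$), and $[i]$ is the \emph{unique} source (a cluster with no incoming edge has no place with an input transition, which forces it to be $[i]$ since in a WF net only $i$ lacks incoming arcs).

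The second ingredient is to convert irreducibility into a degree condition on $G$. Using Lemma~\ref{lemma2} once more, every cluster $c \notin \{[i],[o]\}$ has an incoming edge $a \to c$, witnessed by a transition $t$ with $c \cap P = t^\bullet$; thus $t$ unconditionally enables $c$, and $c \neq [t]$ by acyclicity. If $c$ had a single transition, the d-shortcut rule (Definition~\ref{def:shortcut} with the guard $|c\cap T|=1$) would apply to $t$ and $c$, contradicting irreducibility; hence $|c \cap T| \geq 2$ for every such $c$. Likewise, if two distinct transitions of a cluster had the same successor they would have equal pre- and postsets, so the merge rule (Definition~\ref{def:merge}) would apply; hence distinct transitions of a cluster lead to distinct successor clusters. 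Consequently every cluster $c \notin \{[i],[o]\}$ has at least two out-edges in $G$, all pointing to distinct clusters.

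Finally I would fix a topological ordering $c_0, \ldots, c_{n-1}$ of $G$, in which every edge points forward. The first node $c_0$ must then be the unique source $[i]$ and the last node $c_{n-1}$ the unique sink $[o]$. Assume $n \geq 3$ and consider $c_{n-2}$, which is neither $[i]$ nor $[o]$. All of its out-edges point forward, so they can only reach $c_{n-1} = [o]$; yet $c_{n-2}$ has at least two out-edges to \emph{distinct} clusters, which is impossible when $[o]$ is the sole available target. Hence $n \leq 2$, and since $[i] \neq [o]$ the net has exactly the two clusters $[i]$ and $[o]$. The step I expect to be most delicate is the one turning ``no rule applies'' into the usable bound $|c \cap T| \geq 2$: here I must verify carefully that the guard of the d-shortcut rule is genuinely met at each internal cluster, namely free-choiceness of $c$, the unconditional-enabling inclusion $c \cap P \subseteq t^\bullet$, and $c \notin \{[o],[t]\}$. (Alternatively, one could replace the topological-order argument by Lemma~\ref{lemma2}'s first statement: each S-component of $\oline{\W}$ meets every cluster exactly once, so its cluster sequence is a Hamiltonian path from $[i]$ to $[o]$, and applying the same degree bound to the penultimate cluster of that path yields the contradiction.)
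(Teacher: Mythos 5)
Your argument is correct and is essentially the paper's own proof in a different packaging: both rest on Lemma~\ref{lemma2}, both obtain a merge application from two transitions of one cluster with a common successor cluster, and both obtain a d-shortcut application at an internal cluster with a single transition via the transition that unconditionally enables it. The only cosmetic difference is how the offending cluster is located --- you take the penultimate cluster in a topological order of the cluster DAG, while the paper takes a cluster minimizing the longest-path distance to $o$ --- and your explicit check of the d-shortcut guard (free-choiceness, $c\cap P\subseteq t^\bullet$, and $c\notin\{[o],[t]\}$ via acyclicity) is exactly the verification the paper leaves implicit.
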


\begin{proofsketch}
Assume $\W$ contains more than two clusters. For every cluster $a \neq [o]$,
let $l(a)$ be the length of the longest path from $a$ to $o$ in the graph of $\W$. Let
$a_{\min}$ be any cluster such that $l(a_{\min})$ is minimal, and let $t$ be an arbitrary transition of $a_{\min}$ (notice that $a$ is cannot be $[i]$). By Lemma \ref{lemma2} there is a cluster
$a'$ such that $a'\cap T = t^\bullet$. If $a' \neq [o]$ then by
acyclicity we have $l(a') < l(a_{\min})$, contradicting the minimality of $a_{\min}$.
So we have $t^\bullet = [o]$ for every transition $t$ of $a_{\min}$.
If $a_{\min}$ has more than one transition, then the merge rule is applicable. Otherwise, since $\W$ is strongly connected some transition $t$must have an out-place in $a\cap P$ and thus by \ref{lemma2} $t^\bullet=a\cap P$. The the d-shortcut rule is applicable to $[t], t$ and $a$. 
In both cases we get a contradiction to irreducibility.
\end{proofsketch} 

The above lemmas prove that as long as the FC WF net $\W$ consists of more than two clusters and one transition, one of the rules is applicable. We now show that an application of the rules actually summarizes the net in polynomial time.

\begin{definition}
For every transition $t$, let ${\it shoc}(t)$ be the length of a longest maximal occurrence sequence containing $t$ minus $1$, and let ${\it Shoc}({\cal \W}) = \sum_{t \in T} {\it shoc}(t)$.
\end{definition}

Notice that if $\W$ has $K$ clusters then ${\it shoc}(t) \leq K-1$ holds for every transition $t$. Further, if $K=2$ then ${\it Shoc}(\W)= 0$.

\begin{theorem}
\label{thm:polcomp}
Every FC WF net $\W=(P,T,F,i,o)$ can be completely reduced by means of $|T|$ applications of the merge rule and ${\it Shoc}(\W)$ applications of the d-shortcut rule.
\end{theorem}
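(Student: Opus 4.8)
The plan is to run the merge and d-shortcut rules until the net is irreducible and to control the two rule types by two separate non-negative integer measures: the transition count $|T|$ for the merge rule, and ${\it Shoc}(\W)$ for the d-shortcut rule. (Throughout I work with the sound acyclic FC WF nets to which the preceding lemmas apply; acyclicity is what makes ${\it Shoc}$ finite, and note the iteration rule never fires here, since $t^\bullet={}^\bullet t$ would create a cycle.) By Theorem~\ref{irredtheo}, an irreducible such net has only the clusters $[i]$ and $[o]$; then every transition has pre-set $\{i\}$ and post-set $\{o\}$, so with the merge rule inapplicable the net consists of a single transition, i.e.\ it is completely reduced. It therefore suffices to bound, over any terminating run, the number of merge and of d-shortcut applications \emph{separately}. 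I would show that each merge strictly decreases $|T|$ while no rule increases $|T|$, and that each d-shortcut strictly decreases ${\it Shoc}$ while no rule increases ${\it Shoc}$. Since both measures are non-negative integers, the number of merges is then at most $|T|$ and the number of d-shortcuts at most ${\it Shoc}(\W)$; the strict decrease of one of the two measures at every step also guarantees termination.

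The merge bound is the easy half. A merge replaces two parallel transitions $t_1,t_2$ by one transition $t_m$, so $|T|$ drops by one. A d-shortcut (with $|c\cap T|=1$, say $c\cap T=\{t'\}$) deletes $t$, creates the single new transition $t'_s$, and possibly deletes $t'$ when $c$ is removed in step~(4); hence it leaves $|T|$ unchanged or decreases it, but never increases it. This yields at most $|T|$ merges. I would also check that a merge does not increase ${\it Shoc}$: merging parallel transitions is a length-preserving relabelling of firing sequences, so the longest maximal occurrence sequence through any surviving transition keeps its length, and ${\it shoc}(t_m)=\max({\it shoc}(t_1),{\it shoc}(t_2))$, whence the sum does not grow.

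The heart of the argument, and the step I expect to be most delicate, is that each d-shortcut strictly decreases ${\it Shoc}$. Write $L(u)$ for the length of a longest maximal occurrence sequence containing $u$, so ${\it shoc}(u)=L(u)-1$. I would invoke the occurrence-sequence correspondence from the proof of Theorem~\ref{thm:correctness}: lifting a firing sequence $\sigma_2$ of $\W_2$ to $\W_1$ by replacing every $t'_s$ with $t\,t'$ produces a maximal sequence reaching the same final marking, of length $|\sigma_2|+(\#\,t'_s\text{ in }\sigma_2)$. Applying the lift to the sequence realizing $L_2(u)$ for a surviving transition $u$ gives a maximal $\W_1$-sequence through $u$ that is at least as long, so $L_2(u)\le L_1(u)$. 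Applying it to the sequence realizing $L_2(t'_s)$ gives a maximal $\W_1$-sequence that contains both $t$ and $t'$ and is strictly longer (at least one $t'_s$ is expanded), so $L_2(t'_s)\le\min(L_1(t),L_1(t'))-1$, i.e.\ ${\it shoc}_2(t'_s)\le\min({\it shoc}_1(t),{\it shoc}_1(t'))-1$.

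Summing ${\it shoc}_2$ over the transitions of $\W_2$, inserting $L_2(u)\le L_1(u)$ for the surviving transitions and the bound on ${\it shoc}_2(t'_s)$ for the new one, then shows ${\it Shoc}(\W_2)\le{\it Shoc}(\W_1)-1$, whether or not $c$ is removed in step~(4): in the non-removed case the new transition $t'_s$ absorbs the old $t$-term at a loss of one, and in the removed case it absorbs the old $t'$-term while the deleted $t$-term is discarded, which only helps since every ${\it shoc}_1$ value is non-negative. With this strict decrease in hand, the number of d-shortcuts is bounded by ${\it Shoc}(\W)$, completing the count. The care needed is entirely in the correspondence argument, ensuring the lifted sequences stay maximal and genuinely pass through the intended transitions, and in cleanly handling the two cases of step~(4) of the shortcut rule.
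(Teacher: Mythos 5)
Your proposal is correct and follows essentially the same route as the paper's proof: both use $|T|$ as the measure for merge applications and ${\it Shoc}$ for d-shortcut applications, show that each rule weakly decreases the other measure and strictly decreases its own, and derive the strict drop in ${\it Shoc}$ from the correspondence between occurrence sequences of the net before and after a d-shortcut (each occurrence of $t$ followed by $t'$ collapsing into one occurrence of $t'_s$). Your treatment is if anything slightly more careful than the paper's on the maximality of lifted sequences and on step~(4) of the shortcut rule, but these are refinements of the same argument rather than a different one.
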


\begin{proof}
Let $\W$ and $\W'$ be FC WF nets such that $\W'$ is obtained from $\W$ by means of the merge or the d-shortcut rule. 
Assume that the merge rule is applied. Then we have $|T'| < |T|$ and ${\it Shoc}(\W') \leq {\it Shoc}(\W)$ because the rule reduces the number of transitions by one. 

Now assume that the d-shortcut rule is applied. Let $t$ be the removed transition, $c'$ the cluster that $t$ enabled unconditionally, $t'$ the single transition of $c'$ and $t'_s$ be the new transition produced by the rule. It holds that $|T'| = |T|$ because one transition was removed and one was added. Let $\sigma$ be any maximal occurrence sequence. 

If $\sigma$ does not contain $t$, it still exists in $\W'$. If $\sigma$ contains $t$, it also contains $t'$ because $t$ unconditionally enables $c'$ which has only a single transition. In $\W'$, the corresponding sequence to $\sigma$ is shorter by one as the occurrences of $t$ and $t'$ have been combined into a single occurrence of $t'_s$. Thus $\mathit{shoc}(t'_s) < \mathit{shoc}(t)$.

For all other transitions, $\mathit{shoc}(t'')$ either does not change (if one of the longest maximal occurrence sequences does not contain $t$) or decreases by one (otherwise). Thus in total, $\mathit{Shoc}(\W') <  \mathit{Shoc}(W)$.
\end{proof}

%\subsection{Cyclic stuff}

\begin{reflemma}{lem:soundCyclicSync}
Every sound cyclic FC WF net has a synchronizer.
\end{reflemma}

\begin{proof}
We begin by showing that every cyclic FC WF net has a loop.

Let $\pi$ be a cycle of the graph of the net ${\cal W}$. Let $t_1$ be an arbitrary transition occurring in $\pi$, and let $t_2$ be its successor in $\pi$. $t_1^\bullet \neq \{o\}$ because $o$ has no outgoing transitions, and hence no cycle contains $o$.

By soundness some reachable marking $M_1$ enables $t_1$. Furthermore it holds that $t^\bullet \cap {}^\bullet t_2 \neq \emptyset$. Let $M_1'$ be the marking reached after firing $t_1$ from $M_1$. 
Again by soundness, there is an occurrence sequence from $M_1$ that leads to the final marking. This sequence has to contain an occurrence of a transition of the cluster $[t_2]$ because there is a token on at least one place of this cluster. In particular, some prefix of this sequence leads to a marking $M_2$ that enables $t_2$. 

Repeating this argument arbitrarily for the transitions $t_1$, $t_2$, $t_3$, \ldots , $t_k = t_1$ of the cycle $\pi$, we conclude that there is an infinite occurrence sequence, containing infinitely many occurrences of transitions of the cycle $\pi$. Since the set of reachable markings is finite, this sequence contains a loop. 

Let now $\sigma$ be a minimal loop, i.e. the loop containing the least amount of transitions possible. We show that $\sigma$ contains a transition belonging to a synchronizer. For this proof, we will need to have a decomposition of $\W$ into S-components which corresponds to the idea of agents in a negotiation. Remember that the number of tokens in an S-component is constant.

Let $M$ be a marking where some transition $t_1$ in $\sigma$ is enabled. By soundness, there is some occurrence sequence $\tau$ enabled at $M$ which leads to the final marking. Let $t$ be the last transition in $\tau$ whose cluster has a transition $t^\ast$ that is contained in $\sigma$. We claim that $t^\ast$ is a synchronizer of $\sigma$.

Assume that this is not the case, i.e. there is a marking $M^\ast$ that enables $t^\ast$ (and since the net is free choice, also $t$) and also marks some other places $p \in (\bigcup_{t' \in\sigma, t' \neq t}{}^\bullet t')$. We first pick a subsequence $\tau' = t_1t_2t_3\ldots$ of $\tau$ such that $t_1=t$ and $t_i^\bullet \cap {}^\bullet t_{i+1} \neq emptyset$. Intuitively, we push one token towards the final marking. 

We now construct an occurrence sequence starting from the marking $M^\ast$ as follows: we start by firing $t$. Afterwards, whenever a transition in $\sigma$ is enabled, that transition is fired. If no such transition is enabled, but the next transition in $\tau'$ is enabled, that transition is fired. Otherwise, we add a minimal transition sequence that either ends with the final marking or enables some transition in $\sigma$ or $\tau'$.

First observe that there will always be some place $p \in {}^\bullet t_\sigma$ be marked for some $t_\sigma \in \sigma$. This is because initially one of those places is marked, and whenever a transition whose preset includes such a place is fired, it is a transition in $\sigma$. Thus the final marking can never be reached, because that marking only marks $o$ and no other place. Furthermore, since $\tau'$ is finite, we can only finitely often add transition sequences that enable a transition in $\tau'$ and subsequently fire it. Thus it has to be the case that transitions of $\sigma$ appear infinitely often. (Because of soundness and because the final marking is not reached, the sequence has to be infinite, otherwise there would be a deadlock.)

However, $t$ cannot be enabled ever again: the tokens of the S-components containing $t$ are somewhere along their path towards $o$ (because they have followed $\tau'$). Furthermore, the sequences we add until some transition in $\sigma$ is enabled increase the number of S-components whose token is on some place where a transition of $\sigma$ begins. This also can only happen finitely often.

Thus there must be a loop which uses only a subset of transitions in $\sigma$ contradicting the minimality of $\sigma$.

\end{proof}

\begin{reflemma}{lem:syncOnly}
For sound FC WF nets, a fragment of a minimal synchronizer can be reduced to only contain synchronizers by means of the d-shortcut and merge rule.
\end{reflemma}

\begin{proof}
We construct an auxiliary net which will be acyclic.

Let $\W = (P,T,F,i,o)$ be a WF net, $t$ a minimal synchronizer, $F_t$ its fragment $S$ the set of synchronizers in $F_t$. Every cycle in $F_t$ must contain some transition in $S$ by minimality of $t$. We now describe our auxiliary net:\\
We take the fragment of $t$, add a copy of the places $[t]\cap P$ and also an additional start place $i'$ and end place $o'$.
We then redirect all transitions that put tokens on $[t]$ to instead put them on the copy of $[t]\cap P$.
We add a transition that takes a token from $i'$ and places one on each place of $[t]\cap P$. Finally, we remove all synchronizers except those in $[t]$ and instead add a transition for each cluster $c$ from which we removed a synchronizer. The newly added transition has as preset the places $c\cap P$ of the cluster and as postset the new output place $o'$.

To ensure soundness, we remove all non-reachable clusters. Clusters could be unreachable because they do not appear on a path from $[t]$ to another synchronizer's cluster, but e.g. on a path from another synchronizer's cluster to $[t]$.

This net contains all paths in the fragment leading from $[t]$ to any synchronizer in $S$. Since all cycles in the fragment contain a transition in $S$, the net is acyclic. It can therefore be reduced by means of the d-shortcut and merge rules.

We apply the same rules sequence (minus the applications that shortcut over one of our newly created transitions which end in $o'$) to the original negotiation. The result is that the transition $t$ is shortcut to directly enable some $t' \in S$. We repeat this for the other synchronizers in $S$ and thus reduce the fragment to synchronizers only. 
\end{proof}

\begin{reftheorem}{thm:runtimeCyclic}
Every sound FC-CWF net can be summarized in at most $\mathcal{O}(|C|^4\cdot |T|)$ shortcut rule applications and $\mathcal{O}(|C|^4+|C|^2\cdot|T|)$ merge rule applications where $C$ is the set of clusters of the net.
Any unsound FC-CWF net can be recognized as unsound in the same time.
\end{reftheorem}

\begin{proof}
We again use ${\it Shoc}({\cal \W})$ as in the proof of Theorem \ref{thm:polcomp}.

We start by bounding the time it takes to reduce the minimal fragment with clusters $C_F$ and transitions $T_F$ to a synchronizer only fragment. During this reduction, only the d-shortcut and merge rule are applied. We can thus bound the number of merge rule applications by $T_F$. Using the auxiliary net construction from
\ref{lem:syncOnly} we see that for each synchronizer, a net with $\mathcal{O}(C_F)$ clusters and $\mathcal{O}(T_F)$ transitions is reduced. The reduction takes at most $\mathcal{O}(|C_F| \cdot |T_F|)$ applications of the d-shortcut rule and takes place at most once for each cluster in $\W$.

Next we bound the time to reduce a synchronizer only fragment with clusters $C_F$. We order the clusters in an arbitrary order $c_1,c_2,\ldots$. Consider the vector $(n(c_1), n(c_2),\ldots)$ where $n(c)$ is the number of incoming backward transitions to cluster $c$. With every application of the shortcut rule as described above, the first non-zero entry of this vector decreases, and all zero entries before the decreasing entry stay zero. Each entry $n(v)$ is bounded (after the merge rule is applied) by the number of clusters. Thus the number of shortcut rule applications is bounded by $|C_F|^2$.

After each application of the shortcut rule to a backwards transition, we apply the merge rule exhaustively. Since the cluster $c$ in the definition of the shortcut rule has at most $C_F$ outgoing transitions inside the fragment (remember that in a synchronizer only fragment, a transition has exactly a cluster as pre- and post place), at most $C_F$ new edges may result from the shortcut rule application, thus the number of merge rule applications is bounded by $|C_F|^3$.

Finally, applying the reduction procedure for acyclic net $\W_A$ with $C_A$ clusters and $T_A$ transitions takes again $|T_A|$ applications of the merge rule and $\mathit{Shoc}(\W_A)$ applications of the d-shortcut rule by Theorem \ref{thm:polcomp}. We can bound $\mathit{Shoc}(\W_A)$ by $\mathcal{O}(|C_A|\cdot |T_A|)$ by the definition of $\mathit{Shoc}$.

While the synchronizer only fragment is reduced, we may sometimes shortcut with a cluster which also has transitions leading outside of the fragment. This results in new transitions being produced that lead outside of the fragment. However, the total number of unique transitions produced in this way over during the whole reduction procedure is bounded by $|C|\cdot|T|$ where $C$ is the set of all clusters of the net: each such transition must originate from some cluster, and since we reduce a synchronizer only fragment, the new transition's post places are exactly the post places of an existing transition. Each of those transitions may be produced up to $C$ times and thus there are at most $|C|^2\cdot|T|$ merge rule applications.

We bound the number of clusters $|C_F|$ in each fragment and $|C_A|$ in the final acyclic net by $|C|$, the number of clusters in the original net. We further bound the number of transitions $|T_F|$ and $|T_A|$ by $|C|\cdot|T|$ by the above observation.

Summing up, we obtain for the shortcut rule $\mathcal{O}(|C| \cdot (|C|^2 + |C|^2\cdot|T|) + |C|^2 \cdot |T|) = \mathcal{O}(|C|^3 \cdot |T|)$ shortcut rule applications and $\mathcal{O}(|C| \cdot (|C|^3+|C|\cdot|T|)+|C|\cdot |T| + |C|^2\cdot |T|) = \mathcal{O}(|C|^4+|C|^2\cdot|T|)$ merge rule applications suffice.

\end{proof}

\subsection{Extension to generalized soundness}

We recall some definitions and theorems from \cite{desel2005free}

\begin{definition}[Trap]
A set $R$ of places of a Petri net is a trap if $R^\bullet \subseteq {}^\bullet R$	
\end{definition}

\begin{definition}[Home marking]
Let $N$ be a Petri net, $M_0$ an initial marking. A marking $M$ is a home marking of $(N,M_0)$ if it is reachable from every marking reachable from $M_0$.
\end{definition}

\begin{reftheorem}{thm:kSound}
Let $\W$ be a free choice workflow net. The following statements are equivalent:
\begin{itemize}
\item[(1)] $W$ is sound.
\item[(2)] $W$ is $k$-sound for some $k \geq 1$.
\item[(3)] $W$ is generalized sound.
\end{itemize}
\end{reftheorem}

\begin{proof}
(1) $\Rightarrow$ (2). Assume $W$ is sound. Fix $k \geq 1$, and let
$i^k \by{\sigma} M$ be an arbitrary occurrence sequence of $\W$.
We prove that there exists an occurrence sequence $\tau$ such that
$M \by{\tau} o^k$.

Consider the Petri net $\N$ obtained by adding to 
$\W$ a transition $t^*$ with $o$ as input and $i$ as output place. 
Since $W$ is sound, the Petri net $(\N, i)$ (i.e., the net $\N$ with
initial marking $i$) is live and bounded \cite{DBLP:journals/jcsc/Aalst98}.  
Moreover, the marking $i$ marks all proper traps of $\N$: indeed, by soundness there is an occurrence sequence $i \by{\sigma} i$ such that $\sigma$ contains all transitions of $\N$; since after such a sequence all traps are necessarily marked, the claim is proved. 

We now show that the net $(\N, i^k)$ as initial marking also satisfies these two properties. First, by Theorem 4.21, Theorem 4.27, and Theorem 5.8 of \cite{desel2005free}, 
adding tokens to a live and bounded marking of a free choice net preserves liveness and boundedness (see also exercise 4.8 of \cite{desel2005free}. So $(\N,i^k)$ is also live and bounded. Second, if $i$ marks all proper traps of $\N$, then obviously so does $i^k$.

By the Home Marking Theorem (\cite{desel2005free}, Theorem 8.11), the marking $i^k$ is a home marking of $\N$. Therefore, there exists an occurrence sequence $\tau'$ such that $M \by{\tau'} i^k$. Assume that $M(i) = k'$ for some $0 \leq k' \leq k$. Let $\tau$ be the result of removing from $\tau'$ the last $k-k'$ occurrences of transition $t^*$. We have $M \by{\tau} o^k$, and we are done.

(2) $\Rightarrow$ (1). Assume $W$ is $k$-sound for some $k \geq 1$. By the definition of $k$-soundness, the net $(\N, i^k)$ is bounded and deadlock-free. By Theorem 4.31 of
\cite{desel2005free}, $(\N, i^k)$ is live. By Commoner's theorem, 
$(\N, i)$ is also live, and by Theorem 5.8 of \cite{desel2005free} it is also bounded. 
Therefor, the workflow net $W$ is sound (see \cite{DBLP:journals/jcsc/Aalst98}).

(3) $\Rightarrow$ (2). Obvious from the definition.

(2) $\Rightarrow$ (3). In the statement (1) $\Rightarrow$ (2) we have actually shown that if $W$ is sound, then $W$ is sound for every $k \geq 1$. So assume that 
$W$ is $k$-sound for some $k \geq 1$. Then $W$ is sound, and therefore $W$ is $k$-sound for every $k \geq 1$. 
\end{proof}

\end{document}